\newtheorem{theorem}{Theorem}
\newtheorem{lemma}{Lemma}
\newtheorem{assumption}{Assumption}
\definecolor{codegreen}{rgb}{0,0.6,0}
\definecolor{codegray}{rgb}{0.5,0.5,0.5}
\definecolor{codepurple}{rgb}{0.58,0,0.82}
\definecolor{backcolour}{rgb}{0.95,0.95,0.92}
\lstdefinestyle{mystyle}{
    backgroundcolor=\color{backcolour},   
    commentstyle=\color{codegreen},
    keywordstyle=\color{magenta},
    numberstyle=\tiny\color{codegray},
    stringstyle=\color{codepurple},
    basicstyle=\ttfamily\footnotesize,
    breakatwhitespace=false,         
    breaklines=true,                 
    captionpos=b,                    
    keepspaces=true,                 
    numbers=left,                    
    numbersep=5pt,                  
    showspaces=false,                
    showstringspaces=false,
    showtabs=false,                  
    tabsize=2
}
\newcommand{\R}{\mathbb{R}}
\newcommand{\C}{\mathbb{C}}
\renewcommand{\Re}[1]{\operatorname{Re}\left\{#1\right\}}
\renewcommand{\Im}[1]{\operatorname{Im}\left\{#1\right\}}
\newcommand{\conj}[1]{\mkern 1.5mu\underline{\mkern-1.5mu#1\mkern-1.5mu}\mkern 1.5mu}
\newcommand{\vct}[1]{\boldsymbol{#1}}
\newcommand{\mtx}[1]{\boldsymbol{#1}}
\newcommand{\trace}[1]{\operatorname{trc}\left(#1 \right)}
\newcommand{\rank}{\operatorname{rank}}
\newcommand{\norm}[1]{\left|\left|#1\right|\right|}
\newcommand{\set}[1]{\mathcal{#1}}
\DeclareMathOperator*{\minimize}{\text{minimize}}
\newcommand{\va}{\vct{a}}
\newcommand{\vb}{\vct{b}}
\newcommand{\vd}{\vct{d}}
\newcommand{\vf}{\vct{f}}
\newcommand{\vg}{\vct{g}}
\newcommand{\vp}{\vct{p}}
\newcommand{\vq}{\vct{q}}
\newcommand{\vs}{\vct{s}}
\newcommand{\vu}{\vct{u}}
\newcommand{\vv}{\vct{v}}
\newcommand{\vx}{\vct{x}}
\newcommand{\vy}{\vct{y}}
\newcommand{\vtheta}{\vct{\theta}}
\newcommand{\vxi}{\vct{\xi}}
\newcommand{\vell}{\vct{\ell}}
\newcommand{\mA}{\mtx{A}}
\newcommand{\mB}{\mtx{B}}
\newcommand{\mG}{\mtx{G}}
\newcommand{\mJ}{\mtx{J}}
\newcommand{\mM}{\mtx{M}}
\newcommand{\mS}{\mtx{S}}
\newcommand{\mX}{\mtx{X}}
\newcommand{\mY}{\mtx{Y}}
\newcommand{\mXi}{\mtx{\Xi}}
\newcommand{\mId}{\mathbf{I}}
\newcommand{\vvbar}{\overline{\vct{v}}}
\newcommand{\setA}{\set{A}}
\newcommand{\setB}{\set{B}}
\newcommand{\setF}{\set{F}}
\newcommand{\setG}{\set{G}}
\newcommand{\setN}{\set{N}}
\newcommand{\DPDTH}{\frac{\partial \vp}{\partial \vtheta}}
\newcommand{\DQDV}{\frac{\partial \vq}{\partial \vv}}
\newcommand{\DQDTH}{\frac{\partial \vq}{\partial \vtheta}}
\newcommand{\DPDV}{\frac{\partial \vp}{\partial \vv}}
\newcommand{\DSDV}{\frac{\partial \vs}{\partial \vv}}
\newcommand{\DSDTH}{\frac{\partial \vs}{\partial \vtheta}}
\newcommand{\diag}{\operatorname{diag}}
\title{Phase Retrieval via Model-Free Power Flow Jacobian Recovery\thanks{This material is based upon work supported by the National Science Foundation Graduate Research
Fellowship Program under Grant No. DGE-1650044. Any opinions, findings, and conclusions
or recommendations expressed in this material are those of the author(s) and do not necessarily reflect the views of the National Science Foundation.\\
\hspace*{1.8em}The authors are with the School of Electrical and Computer Engineering, Georgia Institute of Technology, Atlanta, GA, USA. \\
Email: \href{mailto:talkington@gatech.edu}{talkington@gatech.edu} 
\href{mailto:sgrijalva@ece.gatech.edu}{sgrijalva@ece.gatech.edu} Web: \url{https://talkington.dev}
}}
\author{Samuel Talkington, Santiago Grijalva% <-this % stops a space
}
\begin{document}
\maketitle
\begin{abstract}
Phase retrieval is a prevalent problem in digital signal processing and experimental physics that consists of estimating a complex signal from magnitude measurements. This paper expands the classical phase retrieval framework to electric power systems with \emph{unknown} network models and limited access to observations of voltage \emph{magnitudes}, active power injections, and reactive power injections. The proposed method recovers the phase angles and the power-phase angle submatrices of the AC power flow Jacobian matrix. This is made possible by deriving topology and parameter-free expressions for the structural symmetries of the power flow Jacobian that do not depend on the phase angles. These physical laws provide structural constraints for the proposed phase retrieval method. The paper then presents sufficient conditions for guaranteed recovery of the voltage phase angles, which also depend solely on voltage magnitudes, active power injections, and reactive power injections. The method offers two significant benefits: both estimating the voltage phase angles and recovering the power flow Jacobian matrix\textemdash a basis for approximating the power flow equations. Simulations on widely studied open-source test networks validate the findings.
\end{abstract}

\section{Introduction}
\label{sec:introduction}
Flows of active and reactive power in AC electric power systems are directly related to the nodal voltage phase angles. Thus, there is great interest in knowing the phase angles for many network planning, control, and protection applications. To this end, broad efforts have taken place to deploy phasor measurement units (PMUs), which use communication networks and protocols to synchronize measurements of phase angles. 

Nonetheless, due to their complexity and cost, PMU deployment remains limited in some regions and grid types, such as in transmission system boundaries, transmission systems in rural environments, and distribution systems. Another method to determine the voltage phase angles is through state estimation. However, conventional state estimation techniques require detailed and complete knowledge of the network topology and model parameters. This is rarely possible in practice \cite{yuan_inverse_2022,deka_structure_learning}, particularly for combined transmission and distribution systems. In addition, network models are often inaccurate and difficult to maintain.

Therefore, we propose a method that enables the recovery of the voltage phase angles in electric power systems where the network model is unknown or unavailable. We extend previous works on recovering sensitivity matrices that relate the bus complex injections to bus voltage phasors to show that the structure of the AC power flow Jacobian enables the recovery of the power-to-phase angle sensitivities\textemdash and consequently, the phase angles\textemdash when neither observations of the phase angles nor the network topology are available.

\subsection{Problem description}
%Consider an electric power system with \textcolor{black}{$n$ PQ buses.} Each bus $i=1,\dots,n$ is modeled as having a complex voltage phasor and complex net power injection, denoted as 
Each bus $i$ in an electricity network is modeled with a complex power injection and voltage phasor, denoted as 
\begin{equation}
\label{eq:pmu_signal_def}
     v_i \angle{\theta_i} \in \C, \quad 
    p_i + j q_i \in \C, \quad i=1,\dots,n. \tag{Phasors}%\tag{Synchrophasors}
\end{equation}
\textcolor{black}{We often lack full information about these complex variables, but receive measurements of the form}
\begin{equation}
\label{eq:ami_signal_definition}
    \quad  v_i \in \R, \quad p_i + jq_i \in \C, \quad i =1,\dots,n, \tag{Magnitudes}
\end{equation}
where $v_i, p_i$,  and $q_i$ are the voltage \textit{magnitude}, active power injection, and reactive power injection at bus $i$, respectively. The measurements \eqref{eq:ami_signal_definition} may emerge from substation sensors in the context of transmission systems, or from sensors in ``advanced metering infrastructure" (AMI) in the context of distribution systems. Critically, these measurements lack information about the voltage phase angles $\theta_1,\dots,\theta_n$.

Motivated by the contrast between the measurements in \eqref{eq:ami_signal_definition} and \eqref{eq:pmu_signal_def}, we extend the classical \emph{phase retrieval} problem to electric power systems with unknown network models, through the lens of a recovered AC power flow Jacobian matrix  \cite{chen_measurement-based_2016,moffat_power_voltage_sensitivity}. Phase retrieval is a rich digital signal processing theory \cite{Gerchberg1972APA,fienup_phase_1982,burvall_phase_2011,shechtman_phase_2015,waldspurger_phase_2015,jaganathan_phase_2016,bahmani_phase_2017,goldstein_phasemax_2018-1,DONG-JONATHAN-PHRET-REVIEW} that consists of estimating a complex-valued signal $\vx \in \C^n$ given observations of real-valued \emph{magnitude} measurements $\vb \in \R^m$. In the case of a linear system of equations described by a matrix $\mA \in \C^{m \times n}$, classical phase retrieval is concerned with the problem
\begin{equation}
\label{eq:high-level-phir}
    \operatorname{find} \quad \vx \in \C^{n} \quad \text{subject to:} \quad | \mA \vx | = \vb,
\end{equation}
where $\vb \in \R^m$ are the element-wise magnitudes of the measurements $\mA \vx$.

Importantly, we will show that the structure of our particular problem allows us to circumvent full knowledge of the analogous $\mA$\textemdash i.e., the power flow Jacobian matrix\textemdash by exploiting the symmetries in its structure that arise due to electrical circuit physics. In Section \ref{sec:nrpf-review} and in Section \ref{sec:jacobian_structure_review}, this structure is reviewed. We derive a novel representation of the symmetries within this structure in Section \ref{sec:phaseless_symmetry}, which then leads to the recovery of the  voltage phasors and the full power flow Jacobian in Section \ref{sec:phasecut-nrpf}. This representation depends on neither the phase angles nor the network model. Moreover, we then use this representation to develop sufficient conditions for guaranteed voltage phase retrieval and power flow Jacobian invertibility in Section \ref{sec:sufficient-cond-unique}.

\subsection{Contributions}

In summary, this work proposes a novel application of phase retrieval\textemdash electric power systems\textemdash and synthesizes theory from both fields for this application. Specifically, considering only observations of the voltage magnitudes, active power injections, and reactive power injections as in \eqref{eq:ami_signal_definition}, we show that it is possible to recover the voltage phasors as in \eqref{eq:pmu_signal_def}.  This is achieved by deriving and exploiting a novel expression for the physical law describing symmetries in the power flow Jacobian matrix. The proposed method  contributes solutions to several classes of power system estimation problems from \emph{only} the measurements in \eqref{eq:ami_signal_definition}. Specifically, the contributions of this paper are:
\begin{enumerate}
    \item Estimating the voltage phase angles $\vtheta$ as in \eqref{eq:pmu_signal_def}, or their perturbations $\Delta \vtheta$, from \emph{only} the measurements in \eqref{eq:ami_signal_definition}, and developing sufficient conditions that guarantee voltage phase retrieval.
    \item Estimating the power-\textit{voltage phase angle} sensitivity matrices $\DPDTH, \DQDTH \in \mathbb{R}^{n \times n}$, which are  two of the four blocks of the \textit{power flow Jacobian} $\mJ(\vx) \in \R^{2n \times 2n}$, which is defined for a given operating point of the network $\vx \triangleq[ \vtheta^T,\vv^T ]^T \in \mathbb{R}^{2n}$.
    \item Estimating the current injection phase angles from observations of the current magnitudes described by the complex power measurements in \eqref{eq:ami_signal_definition}.
\end{enumerate}
% A high-level summary of the contributions provided by this paper and the information recovered by the proposed methods is illustrated in Fig. \ref{fig:nr-phret-scheme}.

% \begin{figure}[hb]
%     \centering
%     \includegraphics[width=0.625\linewidth,keepaspectratio]{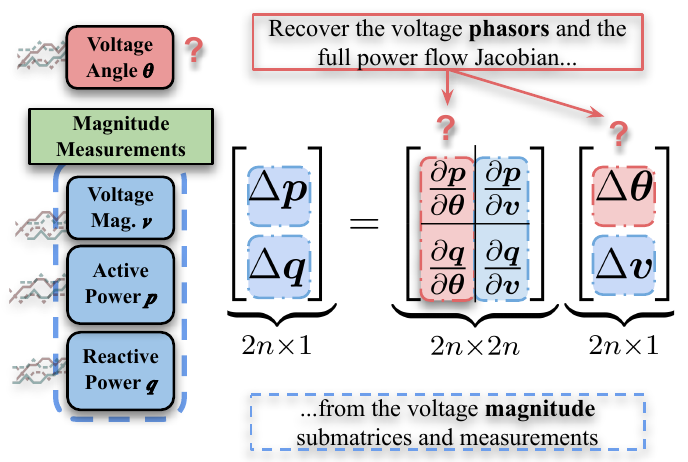}
%     \caption{Illustration of voltage phase retrieval showing model-free power flow Jacobian reconstruction technique}
%     \label{fig:nr-phret-scheme}
% \end{figure}

\nomenclature[01]{\(j\)}{The imaginary unit ($j^2 = -1$)}
\nomenclature[02]{\(A,a\) }{Scalar}
\nomenclature[03]{\(\va\)}{Vector}
\nomenclature[04]{\(\mA\)}{Matrix}
\nomenclature[05]{\( \setA \)}{Set}%caligraphic face, uppercase) sets}
\nomenclature[06]{\(\mathds{1}\)}{Vector of all ones}
\nomenclature[07]{\(\mId\)}{The identity matrix}
\nomenclature[08]{\(\frac{\partial \vf}{\partial \vx}\)}{$m \times n$ Jacobian matrix of $\vf: \R^n \mapsto \R^m$}
\nomenclature[09]{\(\frac{\partial f_i}{\partial x_k}\)}{$(i,k)$-th entry of the Jacobian matrix $\frac{\partial \vf}{\partial \vx}$}
\nomenclature[10]{\(\norm{\va}_2\)}{Euclidean/$\ell_2$ norm of vector $\va$, $\norm{\va}_2 = \left(\sum_{i=1}^n a_i^2 \right)^{\frac{1}{2}}$}
\nomenclature[11]{\(\norm{\mA}_2\)}{Spectral norm/2-norm of matrix $\mA$, $\norm{\mA}_2 = \max_{\vx \neq \boldsymbol{0}} \frac{\norm{\mA \vx}_2}{\norm{\vx}_2}$}
\nomenclature[12]{\(\lambda(\mA)\)}{Eigenvalues of matrix $\mA$}
\nomenclature[13]{\(\sigma(\mA)\)}{Singular values of matrix $\mA$}
\nomenclature[14]{\(\diag(\vx)\)}{Diagonal matrix, with entries of vector $\vx$ along the diagonal}
\nomenclature[15]{\(\trace{\mA}\)}{Sum of the diagonal entries of matrix $\mA$}
\nomenclature[16]{\(\rank(\mA)\)}{Rank of matrix $\mA$}
\nomenclature[17]{$\lvert \cdot \rvert$}{Element-wise absolute value}
\nomenclature[18]{\((\cdot)^T\)}{Transpose}
\nomenclature[19]{\((\cdot)^*\)}{Complex conjugate transpose}
\nomenclature[20]{\((\cdot)^{\dagger}\)}{Moore-Penrose pseudoinverse}
\nomenclature[21]{\((\cdot) \circ (\cdot)\)}{Element-wise multiplication}
\nomenclature[22]{\(\conj{(\cdot)}\)}{Complex conjugate of a complex argument}
\nomenclature[23]{\(\Re{\cdot}\)}{Real component of a complex argument}
\nomenclature[24]{\(\Im{\cdot}\)}{Imaginary component of a complex argument}

{\footnotesize\printnomenclature}

\section{Preliminaries} 
\label{sec:prelim}

The benefits of having voltage phase information as in \eqref{eq:pmu_signal_def} are well-known \cite{terzija_wide-area_2011,liu_d-pmu_2020}. Knowledge of voltage phase information allows for improved power system model validation, control, and  protection \cite{terzija_wide-area_2011}. It also enables improvements in grid reliability, planning, and investment decision-making. Some of these benefits include increasing the accuracy of state estimators in distribution \cite{liu_trade-offs_2012,prasad_trade-offs_2018} and transmission \cite{zhou_alternative_2006} systems. However, the cost and complexity of PMUs have resulted in limited deployment. Approximately 3000 PMUs have been deployed in the United States as of 2022 \cite{kezunovic_2022_pmu_deplyoment_testing_lifecycle}, up from 2000 PMUs as of 2015 and 500 PMUs as of 2009 \cite{overholt_synchrophasor_2015}.

\subsection{Phaseless electricity network modeling techniques}
\label{sec:review-phase-angle-free-methods}
    Motivated by the heterogeneous deployment of PMUs, a growing literature has developed on \emph{phaseless}\textemdash or ``non-PMU"\textemdash measurement-based modeling techniques for electricity networks. This literature focuses on power systems where the network model is unknown, and where measurements are limited to $\vv,\vp,\vq$ as in \eqref{eq:ami_signal_definition}. 
    
    This literature has developed methods for estimating the unknown topology and electrical parameters\textemdash both for single-phase\cite{zhang_phaseless_topology_line_param} and multiphase unbalanced \cite{claeys_unbalanced_phaseless_line_param,chang_bus_impedance_matrix_estimation} networks. The phaseless estimation of the bus impedance matrix \cite{chang_bus_impedance_matrix_estimation} has recently seen experimental validation in hardware. 
    
    Additionally, this literature has developed methods for estimating complex power injections using voltage magnitude measurements. These methods largely center on the voltage-power submatrices of the inverse power flow Jacobian \cite{talkington_conditions_sensitivities,gupta_model-less_2022,christakou_efficient_2013}, and have also seen recent experimental validation in hardware \cite{chang_voltage_sensitivities_realworld,gupta2023experimental}.

    Our work contributes to this burgeoning literature. It provides the first set of methods and sufficient conditions for the recovering the voltage phase angles as in \eqref{eq:pmu_signal_def} given \emph{only} measurements of the form \eqref{eq:ami_signal_definition}\textemdash even when the network model is unknown. %This work shows that it is possible to recover \eqref{eq:pmu_signal_def} given \eqref{eq:ami_signal_definition}, even when the network model is unknown.

    \subsection{Review of the phase retrieval problem}
    \label{sec:phase-retrieval-review}
        Optimization problems of the form \eqref{eq:high-level-phir} are known as \emph{phase retrieval} problems. Classically, the goal of phase retrieval is to recover the phase of a complex signal $\vx \in \C^n$ \textcolor{black}{given a design matrix $\mA \in \C^{m \times n}$ and} real-valued magnitude measurements  $|\mA \vx | = \vb$, where $\vb \in \R^{m}$, and $|\cdot|$ denotes the elementwise magnitudes of a vector. In this paper, we make the noteworthy distinction that the design matrix (the power flow Jacobian) is not assumed to be known, in contrast with the classical approach to phase retrieval. 
        
        One of the earliest investigations of this problem was in experimental physics, where \cite{Gerchberg1972APA} developed an iterative method for recovering the phase of a wave function from the magnitude of its Fourier Transform coefficients, provided by intensity measurements. The authors of \cite{Gerchberg1972APA} note wide applications for their method in X-ray crystallography, which has been discussed more recently in \cite{burvall_phase_2011}. Since then, the phase retrieval problem and its various formulations have seen applications in electron microscopy \cite{fienup_phase_1982}; optical and medical imaging \cite{burvall_phase_2011}; experimental physics \cite{shechtman_phase_2015,jaganathan_phase_2016}; and digital signal processing. We refer the interested reader to \cite{jaganathan_phase_2016} and \cite{DONG-JONATHAN-PHRET-REVIEW} for multidisciplinary surveys of the formulations and applications of phase retrieval.

        \subsubsection{Classical phase retrieval}
            \label{sec:phase_retrieval}
            The classical phase retrieval problem, \textcolor{black}{where $\mA$ is assumed to be known, (in our case, it will be recovered)} can be understood as a quadratic optimization problem with complex decision variables:
            \begin{equation}
            \label{eq:basic_ph_retr_prob}
                \minimize_{\vx \in \C^n, \vy \in \C^m} \quad \norm{\mA \vx - \vy}_2^2 
                \quad \text{subject to:} 
                \quad |\vy| = \vb. 
            \end{equation}
            The introduction of the decision variables $\vy \in \C^m$ in \eqref{eq:basic_ph_retr_prob} is sometimes referred to as ``lifting" the magnitude measurements $\vb \in \R^m$ to $\C^m$, which allows us to recover the complex vector $\mA \vx \in \C^m$. It is noted in \cite{waldspurger_phase_2015} that the problem \eqref{eq:basic_ph_retr_prob} can be recast as solving for the complex signal $\vx \in \C^n$ and for the phases of the magnitude measurements $\vb$. The phases are modeled as rectangular complex numbers $\vu \in \C^m$ that are constrained to lie on the unit circle, i.e., $|u_i| = 1 \ \ \forall i=1,\dots,n$. With this framework, the phase retrieval program \eqref{eq:basic_ph_retr_prob} can be rewritten as
            \begin{equation}
            \label{eq:basic-ph-retr-prob-fixed-phase}
                \minimize_{\vx \in \C^n, \vu \in \C^m} \quad \norm{\mA \vx - \diag(\vb) \vu}^2_2 \quad \text{subject to:} \quad |\vu| = \mathds{1}.
            \end{equation}
            The authors of \cite{waldspurger_phase_2015} note that, for any fixed guess of the phase $\hat{\vu} \in \C^m$ of the magnitude measurements $\vb$, an estimate $\hat{\vx} \in \C^n$ of the complex signal is unique and is given as 
            \begin{equation}
                \hat{\vx} = \mA^{\dagger} \diag(\vb) \hat{\vu} = (\mA^* \mA)^{-1} \mA^* \diag(\vb) \hat{\vu},
            \end{equation}
            where $(\cdot)^*$ is the complex conjugate transpose and $(\cdot)^{\dagger}$ is the Moore-Penrose pseudoinverse.
            
            With this in hand, we can substitute $\vx$ with $\mA^{\dagger} \diag(\vb) \vu$, and solve the quadratic program
            \begin{equation}
             \label{eq:generalized-high-level-phir}
            \minimize_{\vu \in \C^m : |\vu| = \mathds{1}} \vu^* \mM \vu, \ \ \text{subject to:} \ \   \mM = \diag(\vb)(\mId - \mA \mA^{\dagger})\diag(\vb),
            \end{equation}
            where $\mM \succeq 0$. However, the modulus constraints in \eqref{eq:generalized-high-level-phir} are computationally challenging. They are equivalent to the quadratic equality constraints $\Re{u_i}^2 + \Im{u_i}^2 =1$ for all $i$, which produces a non-convex feasible region.

\subsubsection{The PhaseCut algorithm}
            \label{sec:phasecut-review}
            This paper is inspired by a well-known approach to the phase retrieval problem known as \emph{PhaseCut}. The algorithm, developed in \cite{waldspurger_phase_2015}, casts the phase retrieval problem in terms of a semidefinite programming (SDP) relaxation of the maximum cut problem, which finds the maximum number of edges in a network that form a partition of the network into two disjoint subsets. The purpose of PhaseCut essentially boils down to the fact that the intractable modulus constraints in \eqref{eq:generalized-high-level-phir} can be avoided via a relaxed SDP. This is achieved by modeling the phase $\vu \in \C^m$ of the signal $|\mA \vx |$ as a rank-one matrix $\mX \triangleq \vu \vu^*$. With this matrix decision variable, \eqref{eq:generalized-high-level-phir} can be written as
            \begin{equation}
                \label{eq:semidefinite_relaxed_phase_retrieval}
                \minimize_{\mX \succeq 0 } \trace{ \mX \mM }, \quad \text{subject to:} \quad \rank(\mX) = 1, \quad \diag(\mX) = \mathds{1},
            \end{equation}
            which is relaxed by dropping the $\rank(\mX)=1$ constraint. Although we do not explicitly use this SDP relaxation in this work, its precursor quadratic formulation, \eqref{eq:basic-ph-retr-prob-fixed-phase} is central to the forthcoming results.

    \subsection{Review of Newton-Raphson power flow}
    \label{sec:nrpf-review}

    \subsubsection{Power flow equations}
        \label{sec:power_balance_equations}
            We consider an arbitrary electric power network with $n$ ``PQ" buses\textemdash i.e., buses where the complex power injections are specified. The well-known power flow equations in polar form for buses $i =1,\dots,n$ are
            \begin{subequations}
            \label{eq:power_balance}
                \begin{align}
                \label{eq:active_power_balance}
                    p_i &= v_i \sum_{k=1}^{n}  v_k (G_{ik} \cos\theta_{ik} + B_{ik}\sin\theta_{ik}) = p^g_i - p^d_i, \\
                \label{eq:reactive_power_balance}
                    q_i &= v_i \sum_{k=1}^{n}  v_k (G_{ik} \sin\theta_{ik} - B_{ik}\cos \theta_{ik}) = q^g_i - q^d_i,
                \end{align}
            \end{subequations}
            where $\theta_{ik}\triangleq\theta_i -\theta_k$ is the phase angle differences between buses $i,k$, $p^g_i,q^g_i$ are the generated active and reactive powers at bus $i$, $p^d_i,q^d_i$ are the demanded active and reactive powers  at bus $i$, $v_i,v_k$ are the voltage magnitudes at buses $i,k$, and $G_{ik},B_{ik}\in \mathbb{R}$ are the real and imaginary components of the $(i,k)$-th entry $Y_{ik} \triangleq G_{ik} + j B_{ik} \in \C$ of the nodal admittance matrix $\mY = \mG + j \mB \in \C^{n \times n}$. 

            By convention, the state of the system is defined as $\vx \triangleq [\vtheta^T,\vv^T]^T \in \R^{2n}$, where $\vv \in \R^n$ are the voltage magnitudes and $\vtheta \in (-\pi,\pi]^n$ are the voltage phase angles. The complex voltage at each bus takes the form $\overline{v}_i \triangleq v_i e^{j \theta_i} = v_i(\cos(\theta_i) + j \sin(\theta_i)) \in \C$. The bus active and reactive power injection vector $\vg(\cdot): \R^{2n}\mapsto \R^{2n}$, i.e., the vectorization of the power flow equations \eqref{eq:power_balance}, is defined as $\vg(\vx) \triangleq [\vp(\vx)^T,\vq(\vx)^T]^T \in \R^{2n}$. The solution $\vx^*$ of the non-linear system of equations \eqref{eq:power_balance} satisfies $\vg(\vx^*) - \left[(\vp^g - \vp^d )^T,(\vq^g - \vq^d)^T\right]^T = \boldsymbol{0}.$

    \subsubsection{Newton-Raphson power flow in polar coordinates}
    \label{sec:rectangular-review}
        The Newton-Raphson power flow algorithm is a well-studied iterative approach to solving the non-linear power flow equations \eqref{eq:power_balance}. This method consists of repeatedly solving the linear system of equations 
    \begin{equation}
    \renewcommand*{\arraystretch}{1.5}
    \label{eq:nr_pf_definition}
        \begin{bmatrix}
           \Delta \vp\\
           \Delta \vq
         \end{bmatrix} =
         \begin{bmatrix}
             \DPDTH(\vx) & \DPDV(\vx)\\
             \DQDTH(\vx) & \DQDV(\vx)
         \end{bmatrix}
            \begin{bmatrix}
            \Delta \vtheta\\
            \Delta \vv
        \end{bmatrix} = \mJ(\vx) \Delta \vx,
    \end{equation}
    where\textemdash \emph{around} a specified operating point\textemdash $\Delta \vp,\Delta \vq\in \R^n$ are small perturbations in the active and reactive power injections respectively, and $\Delta \vtheta,\Delta \vv \in \R^n$ are small perturbations in the voltage phase angles and magnitudes, respectively. The matrix $\mJ(\vx) \in \R^{2n \times 2n}$ in \eqref{eq:nr_pf_definition} is the \emph{power flow Jacobian}, which contains the partial derivatives of the power flow equations \eqref{eq:power_balance} with respect to the voltage magnitudes $\vv$ and phase angles $\vtheta$. 
    
    In this work, we consider an \emph{estimate} of the power flow Jacobian $\mJ(\vx_t)$ computed at a time point $t$. We use $\mJ(\vx_t)$ to provide an approximation \cite[(5.8)]{molzahn_survey_2019} of the AC power flow equations \eqref{eq:power_balance} around the operating point \emph{measured} at a time point $t$ as
    \begin{equation}
        \label{eq:first-order-linearization}
        \vg(\vx_{t+1}) \approx \vg(\vx_{t}) + \mJ(\vx_t)(\vx_{t+1} - \vx_{t}).
    \end{equation}
    Here, the power flow Jacobian $\mJ(\vx_t)$ denotes the \emph{sensitivity} of the injections to changes in the state from the operating point measured at time $t$. We consider the notion of observed or predicted perturbations in the injections and state around the operating point measured at time point $t$. Denote these perturbations as $\Delta \vg(\vx_t) \triangleq \vg(\vx_{t+1}) - \vg(\vx_t) = [\vp_{t+1}^T,\vq_{t+1}^T]^T - [\vp_{t}^T,\vq_{t}^T]^T$ and $\Delta \vx_t \triangleq \vx_{t+1} - \vx_t = [\vtheta_{t+1}^T,\vv_{t+1}^T]^T - [\vtheta_{t}^T,\vv_{t}^T ]^T$.

In this paper, the network model is \emph{unknown}, and PMU measurements as in \eqref{eq:pmu_signal_def} are \emph{entirely} unavailable. Thus, we can neither measure nor solve for the phase angles, and the quantities $\Delta \vtheta_t$, $\DPDTH(\vx_t)$, and $\DQDTH(\vx_t)$ are unknown. We will assume that the power-voltage magnitude sensitivities $\DPDV(\vx_t)$ and $\DQDV(\vx_t)$ are known in the sense that they can be estimated from prior measurements, which is a well-studied problem (e.g., see \cite{chen_measurement-based_2016,moffat_power_voltage_sensitivity}).

In this paper, we ask the following question: can we recover the phase angles $\vtheta_t$ as in \eqref{eq:pmu_signal_def} and the power flow Jacobian submatrices $\DPDTH(\vx_t), \DQDTH(\vx_t)$ as in \eqref{eq:nr_pf_definition}, using only the measurements $\vv_t,\vp_t,\vq_t$ as in \eqref{eq:ami_signal_definition}?  This question is illustrated in Fig. \ref{fig:nr-phret-scheme}. 
%A high-level summary of this question, the contributions of this paper, and the information recovered by the proposed method is illustrated in Fig. \ref{fig:nr-phret-scheme}.
\begin{figure}[t]
    \centering
    \includegraphics[width=0.55\linewidth,keepaspectratio]{}
    \caption{Illustration of voltage phase retrieval showing model-free power flow Jacobian reconstruction technique}
    \label{fig:nr-phret-scheme}
\end{figure}

    \subsection{The structure of the power flow Jacobian}
    \label{sec:jacobian_structure_review}

        \subsubsection{Power-voltage magnitude sensitivity matrices}
        \label{sec:power-vmag-sens-definition}
            Forming the entries of the block submatrices of the power flow Jacobian requires that we differentiate the power flow mismatch equations \eqref{eq:power_balance} with respect to the voltage magnitudes and phase angles at each bus. First, consider the derivatives of the net active power and reactive power injections at bus $i$ with respect to the voltage magnitudes at bus $k$:
            \begin{subequations}
            \label{eq:vmag_jac_blocks}
                \begin{align}
                \label{eq:dpdv_jac_block}
                    \frac{\partial p_i}{\partial v_k} = 
                    \begin{dcases}
                            v_i (G_{ik}\cos\theta_{ik} + B_{ik} \sin\theta_{ik} ) & i \neq k,\\
                            2 v_i G_{ii}  +  \sum_{k: k\neq i} v_k(G_{ik} \cos\theta_{ik} + B_{ik}\sin\theta_{ik})  & i = k.
                    \end{dcases}
                    \\
                \label{eq:dqdv_jac_block}
                    \frac{\partial q_i}{\partial v_k} = 
                    \begin{dcases}
                            v_i (G_{ik}\sin\theta_{ik} - B_{ik} \cos\theta_{ik} ) & i \neq k,\\
                            -2v_iB_{ii} + \sum_{k : k\neq i} v_k(G_{ik} \sin\theta_{ik} - B_{ik}\cos\theta_{ik}) & i = k.
                    \end{dcases}
                \end{align}
            \end{subequations}
            The diagonal terms can be alternatively expressed as $\frac{\partial p_i}{\partial v_i} = \frac{p_i(\vx)}{v_i}+ G_{ii} v_i $ and $\frac{\partial q_i}{\partial v_i} = \frac{q_i(\vx)}{v_i} - B_{ii}v_i$. 
        \subsubsection{Power-voltage phase angle sensitivity matrices}
            \label{sec:power-theta-sens-definition}
            Similarly to the power-voltage magnitude sensitivities in Section \ref{sec:power-vmag-sens-definition}, by differentiating the net active and reactive power injections with respect to the voltage phase angles, we obtain
            
            \begin{subequations}
            \label{eq:vph_jac_blocks}
                \begin{align}
                \label{eq:dpdth_jac_block}
                    \frac{\partial p_i}{\partial \theta_k} =
                    \begin{dcases}
                            v_i v_k (G_{ik} \sin\theta_{ik} - B_{ik}\cos\theta_{ik}) & i \neq k,\\
                            -v_i \sum_{k: k\neq i} v_k (G_{ik}\sin\theta_{ik} - B_{ik} \cos \theta_{ik}) & i = k,
                    \end{dcases}\\
            \label{eq:dqdth_jac_block}
                    \frac{\partial q_i}{\partial \theta_k} = 
                    \begin{dcases}
                        -v_i v_k(G_{ik}\cos \theta_{ik} + B_{ik}\sin\theta_{ik}) & i \neq k,\\
                        v_i \sum_{k: k\neq i} v_k (G_{ik} \cos\theta_{ik} + B_{ik} \sin \theta_{ik}) & i = k.
                    \end{dcases}
                \end{align}
            \end{subequations}
            The diagonal terms can alternatively be expressed as $\frac{\partial p_i}{\partial \theta_i}= -q_i(\vx) - B_{ii}v_i^2 $ and $\frac{\partial q_i}{\partial \theta_i} = p_i(\vx) - G_{ii}v_i^2 $.
        %These partial derivatives form the entries of the block submatrices of the power flow Jacobian.% As an illustrative example, consider the IEEE 5-bus test circuit, depicted in \textbf{Figure 1}. \textbf{Figure 2} illustrates the power flow Jacobian for this network. 

\section{Analytical results}
\label{sec:analytical-results}
        
        \subsection{Model-free recovery of the power flow Jacobian}
        \label{sec:phaseless_symmetry}
            There are well-known mathematical symmetries that relate the power flow Jacobian block submatrices. These symmetries arise due to the physical laws of the power flow equations \cite[Ch. 9]{GRAINGER}. The canonical expressions  for these symmetries \cite[(9.5.4), (9.5.7)]{GRAINGER} depend on the network topology and electrical parameters. In this section, we propose a topology and parameter-free extension of these known symmetries that:
            \begin{enumerate}
                \item Has a valuable engineering implication, supporting system identification without voltage phase angles.
                \item Has a valuable theoretical implication, by providing structural constraints for applying the phase retrieval framework\textemdash informed by the physics of Newton-Raphson power flow.
            \end{enumerate}
            
            Specifically, we show that we can represent $\DPDTH$ and $\DQDTH$ as a function of the power-to-voltage magnitude submatrices of the power flow Jacobian $\DPDV$ and $\DQDV$, the voltage magnitudes $\boldsymbol{v}$, and the active and reactive power injections $\boldsymbol{p},\boldsymbol{q}$. Representing $\DPDTH$ and $\DQDTH$ in this way is valuable because:
            \begin{enumerate}
                \item It removes dependence on direct measurements of the voltage phase angles, as in \eqref{eq:pmu_signal_def}.
                \item It requires only measurements as in \eqref{eq:ami_signal_definition}, and \emph{estimates} for the power-voltage magnitude sensitivities; which is a well-studied problem \cite{chen_measurement-based_2016,moffat_power_voltage_sensitivity}.
            \end{enumerate}
            This result is achieved through the following Lemma:

            \begin{lemma}
            \label{lemma:phaseless-symmetry}
            Consider an electric power network with PQ buses $\{1,\dots,n\}$. Let $\vx = [\vtheta^T,\vv^T]^T$ be a guess for the network state, producing power injections $\vp(\vx),\vq(\vx)$. For all buses $i,k \in \{1,\dots,n\}$, the entries $\frac{\partial p_i}{\partial \theta_k}$ and $\frac{\partial q_i}{\partial \theta_k}$ of the power-voltage phase angle sensitivity matrices, $\DPDTH(\vx),\DQDTH(\vx) \in \mathbb{R}^{n \times n}$, as defined in \eqref{eq:vph_jac_blocks}, can be written as 
                \begin{subequations}
                \label{eq:thm1_expressions}
                    \begin{align}
                    \label{eq:thm1_dpdth}
                        \frac{\partial p_i}{\partial \theta_k} = \begin{dcases}
                        v_k \frac{\partial q_i}{\partial v_k} & i \neq k\\
                        v_i \frac{\partial q_i}{\partial v_i} - 2 q_i  & i = k,
                        \end{dcases}
                    \\
                    \label{eq:thm1_dqdth}
                        \frac{\partial q_i}{\partial \theta_k} = \begin{dcases}
                        -v_k \frac{\partial p_i}{\partial v_k} & i \neq k\\
                        -v_i \frac{\partial p_i}{\partial v_i} + 2p_i & i = k.
                        \end{dcases}
                    \end{align}
                \end{subequations}
            \end{lemma}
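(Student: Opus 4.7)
The plan is to prove the identities by direct verification, comparing the closed-form expressions for the Jacobian entries given in equations \eqref{eq:vmag_jac_blocks} and \eqref{eq:vph_jac_blocks}. The argument naturally splits into two cases based on whether the bus indices are off-diagonal ($i \neq k$) or diagonal ($i = k$), since these entries have structurally different forms.

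For the off-diagonal case, I would proceed by inspection. Multiplying the expression for $\frac{\partial q_i}{\partial v_k}$ in \eqref{eq:dqdv_jac_block} by $v_k$ produces $v_i v_k (G_{ik}\sin\theta_{ik} - B_{ik}\cos\theta_{ik})$, which is precisely the expression for $\frac{\partial p_i}{\partial \theta_k}$ in \eqref{eq:dpdth_jac_block}. An analogous calculation, tracking the minus sign, gives the second identity $\frac{\partial q_i}{\partial \theta_k} = -v_k \frac{\partial p_i}{\partial v_k}$ for the $i \neq k$ case. This portion of the proof is essentially pattern-matching between the two blocks of formulas.

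For the diagonal case, I would invoke the compact alternative forms of the diagonal entries that are listed immediately after \eqref{eq:vmag_jac_blocks} and \eqref{eq:vph_jac_blocks}. Specifically, I would substitute
\begin{equation*}
\frac{\partial p_i}{\partial v_i} = \frac{p_i}{v_i} + G_{ii} v_i, \qquad \frac{\partial q_i}{\partial v_i} = \frac{q_i}{v_i} - B_{ii} v_i, \qquad \frac{\partial p_i}{\partial \theta_i} = -q_i - B_{ii} v_i^2, \qquad \frac{\partial q_i}{\partial \theta_i} = p_i - G_{ii} v_i^2,
\end{equation*}
into the right-hand sides of \eqref{eq:thm1_dpdth} and \eqref{eq:thm1_dqdth}. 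For instance, $v_i \frac{\partial q_i}{\partial v_i} - 2 q_i = q_i - B_{ii} v_i^2 - 2 q_i = -q_i - B_{ii} v_i^2$, which matches $\frac{\partial p_i}{\partial \theta_i}$; the second identity is symmetric. If desired for completeness, I would briefly justify the compact alternative forms by separating the $k = i$ term in the sums \eqref{eq:vmag_jac_blocks} and \eqref{eq:vph_jac_blocks} (where $\theta_{ii}=0$, so $\cos\theta_{ii}=1$, $\sin\theta_{ii}=0$) and recognizing the remaining sum as $p_i / v_i$ or $q_i / v_i$ via the power flow equations \eqref{eq:power_balance}.

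There is no real obstacle here: the result is a bookkeeping identity, and its ``difficulty'' is conceptual rather than technical. The interesting content lies not in the verification but in the observation that the network parameters $G_{ik}, B_{ik}$ and the phase angles $\theta_{ik}$ cancel from the relationships, leaving expressions depending only on quantities in \eqref{eq:ami_signal_definition} and on the power-voltage magnitude sensitivities. The main care point is keeping the sign conventions consistent across the four blocks and between the $i \neq k$ and $i = k$ branches.
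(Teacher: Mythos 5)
Your proof is correct and follows essentially the same route as the paper's: the off-diagonal identities are verified by direct comparison of \eqref{eq:vmag_jac_blocks} with \eqref{eq:vph_jac_blocks}, and the diagonal identities reduce to the power balance equations \eqref{eq:power_balance}. The only cosmetic difference is that for the diagonal case you route through the compact forms $\frac{\partial p_i}{\partial v_i} = p_i/v_i + G_{ii}v_i$ and $\frac{\partial q_i}{\partial v_i} = q_i/v_i - B_{ii}v_i$, whereas the paper expands the sums and cancels the $k=i$ term directly; since justifying those compact forms is exactly that same cancellation (with $\theta_{ii}=0$), the content is identical.
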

            \begin{proof}
            The relationship for the off-diagonal terms is well-known \cite{GRAINGER}, and is found by substituting \eqref{eq:dpdv_jac_block} and \eqref{eq:dqdv_jac_block} into \eqref{eq:dqdth_jac_block} and \eqref{eq:dpdth_jac_block}, respectively:
            \begin{align}
                \frac{\partial p_i}{\partial \theta_k} = 
                v_k\underbrace{
                    \left(
                        v_i(G_{ik} \sin \theta_{ik} - B_{ik} \cos \theta_{ik}
                    \right)
                }_{\triangleq \frac{\partial q_i}{\partial v_k}}, \quad i \neq k,\\
                \frac{\partial q_i}{\partial \theta_k} = 
                -v_k\underbrace{
                    \left( 
                        v_i(G_{ik} \cos \theta_{ik} + B_{ik} \sin \theta_{ik}) 
                    \right)
                }_{\triangleq \frac{\partial p_i}{\partial v_k}}, \quad  i\neq k.
            \end{align} 
            The topology-free expressions for the diagonal terms $\frac{\partial p_i}{\partial \theta_i}$ and $\frac{\partial q_i}{\partial \theta_i}$ in \eqref{eq:thm1_dpdth} and \eqref{eq:thm1_dqdth} are found by manipulating the power flow equations \eqref{eq:power_balance} and the definition of the power-voltage magnitude sensitivities \eqref{eq:vmag_jac_blocks}. To show this, first consider the expression \eqref{eq:thm1_dpdth}. We want to show that this holds for the case when $i=k$. Substituting the definition of the diagonal reactive power-voltage magnitude sensitivities $\frac{\partial q_i}{\partial v_i}$, \eqref{eq:dqdv_jac_block}, and the reactive power balance equation, \eqref{eq:reactive_power_balance}, suppose that we can write $\frac{\partial p_i}{\partial \theta_i}$ for $i=1,\dots,n$ as
            \begin{equation}
            \begin{split}
                  %\frac{\partial p_i}{\partial \theta_i} = 
                  v_i\frac{\partial q_i}{\partial v_i}  - 2 q_i =
                -2v_i^2 B_{ii}  + v_i \sum_{k : k\neq i} v_k(G_{ik} \sin\theta_{ik} - B_{ik}\cos\theta_{ik}) 
                  - 2v_i \sum_{k=1}^{n}  v_k (G_{ik} \sin\theta_{ik} - B_{ik}\cos \theta_{ik}).
            \end{split}
            \end{equation}
            Next, pull out the $i$th term from the argument of the sum over $k=1$ to $n$, 
            \begin{equation}
            %underbrace
               % v_i(G_{ii}  \underbrace{\sin\theta_{ii}}_{=0} - B_{ii}\underbrace{\cos\theta_{ii}}_{=1}),
            %no underbrace
               v_i(G_{ii}  \sin\theta_{ii} - B_{ii}\cos\theta_{ii}) = -v_iB_{ii},
            \end{equation}
            which allows us to express $\frac{\partial p_i}{\partial \theta_i}=   v_i \frac{\partial q_i}{\partial v_i} - 2q_i$ as
            \begin{equation}
            \label{eq:dpdth_proof_penultimate_step}
                \begin{split}
                   -2v_i^2 B_{ii}  &+ v_i 
                    \sum_{k: k \neq i} v_k(G_{ik} \sin\theta_{ik} - B_{ik}\cos\theta_{ik}) %\dots
                     - 2v_i \sum_{k: k \neq i}  v_k (G_{ik} \sin\theta_{ik} - B_{ik}\cos \theta_{ik}) +2 v_i^2 B_{ii}.
                \end{split}
            \end{equation}
            Simplifying \eqref{eq:dpdth_proof_penultimate_step}, we obtain \eqref{eq:dpdth_jac_block} for $i=k$,
            \begin{align}
                %\frac{\partial p_i}{\partial \theta_i} = 
                   - v_i \sum_{k: k \neq i}  v_k (G_{ik} \sin\theta_{ik} - B_{ik}\cos \theta_{ik}) \triangleq \frac{\partial p_i}{\partial \theta_i},
            \end{align}
            as desired, which verifies that the expression \eqref{eq:thm1_dpdth} is true for $i=k$. 
            
            Analogously, we can see that the expression \eqref{eq:thm1_dqdth} for the reactive power-voltage phase angle sensitivity matrix is true by substituting in the expression \eqref{eq:dpdv_jac_block} for $\frac{\partial p_i}{\partial v_i}$, and the active power balance equation \eqref{eq:active_power_balance}. Then, applying the same logic as before, suppose that we can write $ \frac{\partial q_i}{\partial \theta_i}$ as:
            \begin{subequations}
            \label{eq:reactive_power_phase_sens_derivation}
                \begin{align}
                    \frac{\partial q_i}{\partial \theta_i}  = \ -v_i \frac{\partial p_i}{\partial v_i} &+ 2p_i,\\
                     \begin{split}
                          = -2 v_i^2 G_{ii}  &- v_i \sum_{k : k\neq i} v_k(G_{ik} \cos\theta_{ik} + B_{ik}\sin\theta_{ik})
                          +2v_i \sum_{k=1}^{n}  v_k (G_{ik} \cos\theta_{ik} + B_{ik}\sin\theta_{ik}),
                     \end{split}\\
                     \begin{split}
                          = -2 v_i^2 G_{ii}  &- v_i \sum_{k : k\neq i} v_k(G_{ik} \cos\theta_{ik} + B_{ik}\sin\theta_{ik})
                          +2v_i \sum_{k: k \neq i}  v_k (G_{ik} \cos\theta_{ik} + B_{ik}\sin\theta_{ik}) + 2v_i^2G_{ii}.
                     \end{split}
                \end{align}
            \end{subequations}
            Simplifying \eqref{eq:reactive_power_phase_sens_derivation} yields \eqref{eq:dqdth_jac_block} for $i=k$ as
            \begin{equation}
                v_i \sum_{k: k \neq i}  v_k (G_{ik} \cos\theta_{ik} + B_{ik}\sin\theta_{ik}) \triangleq \frac{\partial q_i}{\partial \theta_i},
            \end{equation}
            which verifies the desired expression \eqref{eq:thm1_dqdth} for $i=k$.
            \end{proof}

            %Understanding the structural symmetries in the entries of the power flow Jacobian is valuable from both a practical and theoretical perspective for measurement-based modeling of electric power systems. They allow us to establish exact relationships between the blocks of the power flow Jacobian that depend on \emph{neither the network topology nor the voltage phase angles.} To achieve this, 
            Notice that by Lemma \ref{lemma:phaseless-symmetry} the power-voltage phase angle sensitivity matrices can be expressed as functions $\DPDTH,\DQDTH : \R^n \times \R^n \mapsto \R^{n \times n}$ of the form
            \begin{subequations}
            \label{eq:augmented-phase-sensitivities}
            \begin{empheq}[box=\tcbhighmath]{align}
                \DPDTH(\vv,\vq) = \diag(\vv) \DQDV - 2 \diag(\vq),\\
                \DQDTH(\vv,\vp) = -\diag(\vv)  \DPDV + 2 \diag(\vp),
            \end{empheq}
            \end{subequations}
            which are implicitly parameterized by $\DQDV$ and $\DPDV$.

    Furthermore, the matrices \eqref{eq:augmented-phase-sensitivities} are functions that depend solely on \eqref{eq:ami_signal_definition}, and depend on \emph{neither the network topology nor the voltage phase angles}. The matrices $\DPDV,\DQDV$ can also be efficiently estimated from \eqref{eq:ami_signal_definition} using well-studied methods \cite{chen_measurement-based_2016,moffat_power_voltage_sensitivity,chen_measurement-based_2014}.

    Computationally, we can verify expressions \eqref{eq:thm1_dpdth} and \eqref{eq:thm1_dqdth}  by considering a simple 2-bus network, shown in the ``flat start" configuration at the left of Fig. \ref{fig:nr-equivalence}. We solve the system \eqref{eq:nr_pf_definition} for the state $\vx_k = [\theta_2^{(k)}, v_2^{(k)}]^T$ iteratively for $k=1,\dots$ using the usual expressions \eqref{eq:vph_jac_blocks} and the phaseless expressions \eqref{eq:thm1_expressions} for $\frac{\partial p_2}{\partial \theta_2}$ and $\frac{\partial q_2}{\partial \theta_2}$. The equivalence of the mismatches yielded by this procedure is illustrated on the right of Fig. \ref{fig:nr-equivalence}. Further numerical evidence for the exactness of the expressions \eqref{eq:augmented-phase-sensitivities} in a large-scale test network is available in Appendix \ref{apdx:phaseless-nrpf-iter}.
    \begin{figure}[b]
    %{}\hfill
        \centering
       \includestandalone[width=0.49\linewidth,keepaspectratio]{figs/circuit}\hfill 
       \includegraphics[width=0.48\linewidth,keepaspectratio]{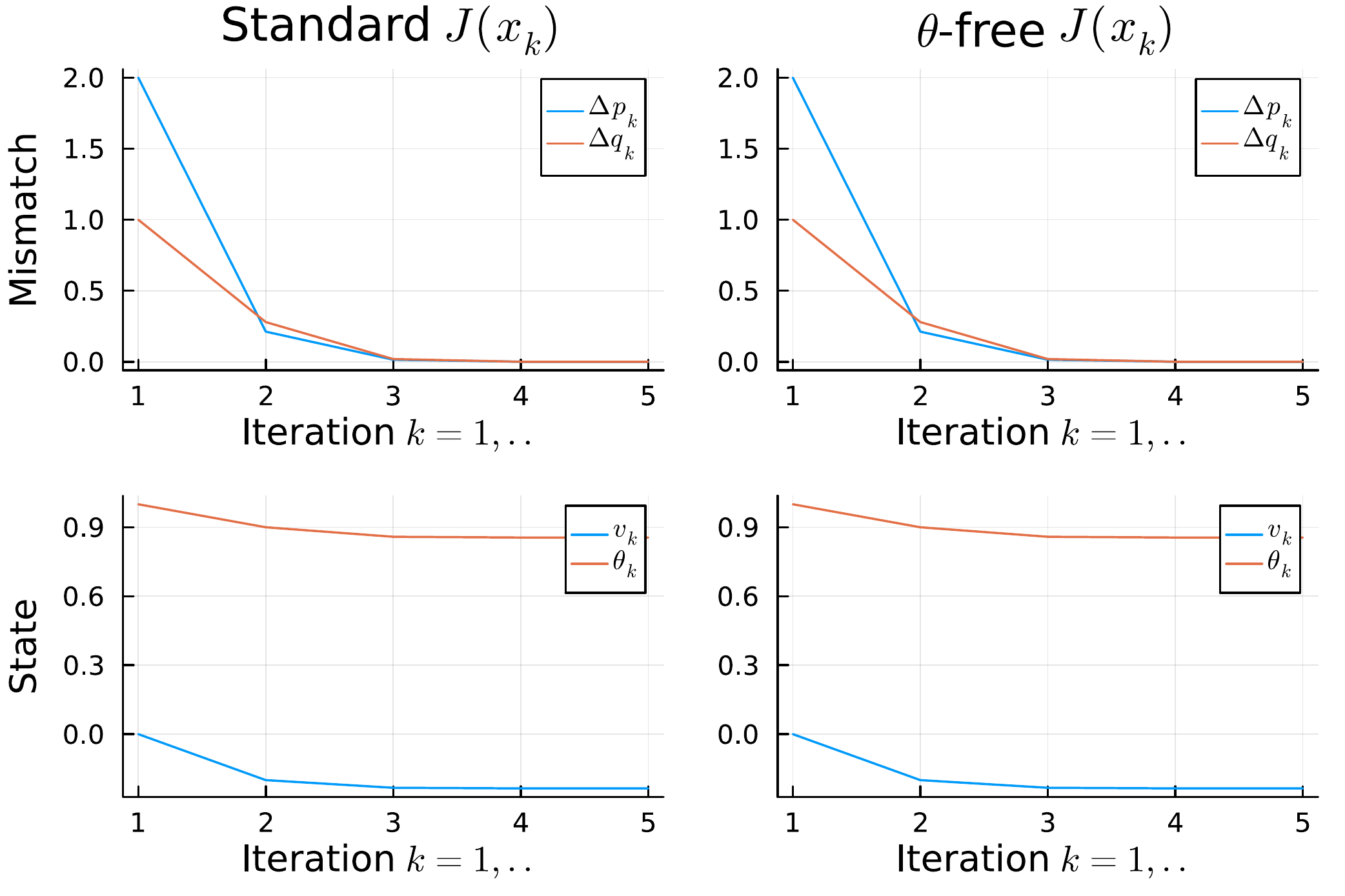}
        \caption{Equivalence of the Newton-Raphson iterations \eqref{eq:nr_pf_definition} using the $\theta$-free expressions \eqref{eq:thm1_expressions} and standard expressions \eqref{eq:vph_jac_blocks} for $\DPDTH,\DQDTH$ (right) for a simple two bus test case (left). }
        \label{fig:nr-equivalence}
    \end{figure}
    
    %The information in Section \ref{sec:phaseless_symmetry} is useful because penetration of synchrophasor measurements is known to be rare in distribution systems and transmission network boundaries \cite{claeys_unbalanced_phaseless_line_param,zhang_phaseless_topology_line_param}. Specifically, we  can use the relationships \eqref{eq:thm1_expressions} as physics-informed constraints for estimating the full power flow Jacobian without observations of $\vtheta$, and/or the trajectories $\Delta \vtheta$.  %An engineering challenge in modeling distribution systems in these environments is that the network models used to study these systems are frequently out of date, or entirely unavailable \cite{reno_imofi_2022}, which has spurred research to develop ``measurement-based", ``sensor-based", or ``data-driven" control and optimization algorithms to model electric power systems that cannot be fully observed \cite{lin_enhancing_2021,lin_data-driven_2021}. Meaningful physical constraints are preferable in machine learning applications to understand these electric power systems.

    \subsection{Phase retrieval via power flow Jacobian recovery}
        \label{sec:phasecut-nrpf}
        We propose a specialized variant of the quadratic phase retrieval program \eqref{eq:basic-ph-retr-prob-fixed-phase} for electric power networks. The program leverages the structural symmetries of the power flow Jacobian described in Section \ref{sec:phaseless_symmetry}. 

        \subsubsection{Formulating the objective}
        Let $\vvbar_t \in \C^n$ be observed voltage phasors as in \eqref{eq:pmu_signal_def}, where $\overline{v}_{t}^{(i)} = v_{t}^{(i)} e^{j \theta_t^{(i)}} \triangleq v_t^{(i)} \angle \theta_t^{(i)} \in \C $ denotes the voltage phasor at bus $i \in \{1,\dots,n\}$. Suppose that we observe measurements $\vp_{t},\vq_{t} \in \R^n$ of active and reactive power injections, respectively; and of voltage magnitudes  $\vv_{t} \triangleq |\vvbar_{t}| \in \R^n$, as in \eqref{eq:ami_signal_definition} and Fig. \ref{fig:nr-phret-scheme}. We want to recover the phase angle perturbations $\Delta \vtheta_t \in (-\pi,\pi]^n$ using the observed perturbations $\Delta \vp_t$, $\Delta \vq_t$, and $\Delta \vv_t$. These are used in the linear approximation \eqref{eq:first-order-linearization}, taken around the network operating point. 
        
        Notice that we can equivalently represent a voltage phase angle $\theta_i$ as a complex number $u_i \in \C$. This is done by requiring that $u_i$ lies on the unit circle, and that its argument is the voltage phase angle $\theta_i$, i.e.,
        \begin{equation}
            |u_i|=1, \quad \textcolor{black}{ \theta_i = \operatorname{atan2}\left( \Im{u_i},\Re{u_i}\right) \quad i=1,\dots,n,} 
        \end{equation}
        where $\operatorname{atan2}(\Im{u_i},\Re{u_i})$ is the argument, or phase angle\footnote{
            The phase $\theta = \operatorname{atan2}(y,x)$ of $x + jy \in \C$ is defined as $\theta= \arctan(\frac{y}{x})\mathds{1}[x \neq 0] + (1-2\mathds{1}[y<0])(\pi \mathds{1}[x <0] + \frac{1}{2}\pi \mathds{1}[x =0])$, and is undefined if $x=y=0$. The operator $\mathds{1}[\cdot] =1$ if $[\cdot]$ is true, and $\mathds{1}[\cdot] =0$ otherwise.
        }, 
        of $u_i \in \C$.
     Drawing inspiration from the classical phase retrieval algorithm \eqref{eq:high-level-phir}, at each iteration $t$ we want to find the voltage phase angles $\vtheta_t$ that satisfy Euler's formula for a complex number on the unit circle $\vu_t$, i.e., $\vvbar_t = \diag(\vv_t) \vu_t(\vtheta_t),$ where
        \begin{equation}
             u_t^{(i)}\left(\theta_t^{(i)}\right) =  \cos\theta_t^{(i)} +j \sin\theta_t^{(i)}, \quad \left|u_t^{(i)}\right|=1, \ i=1,\dots,n.
        \end{equation}
        We will use this representation to directly apply the logic of the classical phase retrieval problem
        \begin{equation}
                \operatorname{find} \quad  \Delta \vtheta_t \in \R^n 
                \quad \text{subject to:} \quad \big| \diag(\Delta \vv_t) \vu_t(\Delta \vtheta_t) \big| = \Delta \vv_t,
                %\quad \textsf{st:} \quad |\mJ(\vx_t)^{-1}\vg(\vx_t)| = \Delta \vv_t,%|\mJ(\vx_k) [\Delta \vtheta_k,\Delta \vv_k]^T| = \vg(\vx_k).
        \end{equation}
        which consists of solving for a given iteration/sample $t$:
        \begin{equation}
        \label{eq:generic-nr-phase-recovery}
        \minimize_{\Delta \vtheta_t} \quad \norm{\Delta \vg(\vx_t) - \mJ(\vx_t)\Delta \vx_t}_2^2, \quad \text{subject to:} \quad \big| \diag(\Delta \vv_t) \vu_t(\Delta \vtheta_t) \big| = \Delta \vv_t.  
        \end{equation} 
        where $\Delta \vg(\vx_t) \triangleq \big[ \Delta \vp_t^T,\Delta \vq_t^T\big]^T$ denotes the observed perturbations $\big[ \Delta \vp_t^T,\Delta \vq_t^T\big]^T = \big[\vp_{t+1}^T, \vq_{t+1}^T\big]^T - \big[\vp_{t}^T, \vq_{t}^T\big]^T$ and $\Delta \vx_t \triangleq \big[ \Delta \vtheta_t^T,\Delta \vv_t^T\big]^T$ denotes the grid state change $ \big[ \Delta \vtheta_t^T,\Delta \vv_t^T\big]^T = \big[\vtheta_{t+1}^T,\vv_{t+1}^T\big]^T - \big[\vtheta_{t}^T,\vv_{t}^T \big]^T$.

        % Let $\mJ_v = \left[ \DPDV \ \DQDV \right] \in \R^{n \times 2n}$ be the power-voltage magnitude sensitivities. Then, any candidate solution $\vtheta_t$ to the voltage phase angle recovery program \eqref{eq:generic-nr-phase-recovery} satisfies
        % \begin{equation}
        %     \vg(\vx_t) = \mJ_v(\vx_t) \diag(\vv_t) (\cos(\vtheta_t) + j \sin(\vtheta_t)).
        % \end{equation}
        % Therefore, we can rewrite the objective of the last program as
        % \begin{equation}
        %     \norm{
        %     \begin{bmatrix}
        %         \Delta \vp_k\\
        %         \Delta \vq_k
        %     \end{bmatrix}
        %     -
        %     \begin{bmatrix}
        %         \DPDTH & \DPDV\\
        %         \DQDTH & \DQDV
        %     \end{bmatrix}
        %     \begin{bmatrix}
        %         \diag{\Delta \vv_k} & \boldsymbol{0}\\
        %         \boldsymbol{0} & \diag{\Delta \vv_k}
        %     \end{bmatrix}
        %     \begin{bmatrix}
        %         \Re{\vu_k}\\
        %         \Im{\vu_k}
        %     \end{bmatrix}
        %     }_2^2,
        % \end{equation}
        % or equivalently,
        % \begin{equation}
        %     \norm{
        %    \begin{bmatrix}
        %         \Delta \vp_k\\
        %         \Delta \vq_k
        %     \end{bmatrix}
        %     -
        %     \begin{bmatrix}
        %         \DPDTH \diag(\Delta \vv_k)\Re{\vu_k} + \DPDV \diag(\Delta \vv_k) \Im{\vu_k}\\
        %         \DQDTH\diag(\Delta \vv_k)\Re{\vu_k} + \DQDV \diag(\Delta \vv_k) \Im{\vu_k}
        %     \end{bmatrix}
        %     }_2^2.
        % \end{equation}
        % Let us define the matrix-matrix-vector multiplication comprising the second term inside the norm of the above expression as $\vg(\vu_k,\DPDTH,\DQDTH) : \C^n \times \R^{n \times n} \times \R^{n \times n} \mapsto \R^{2n}.$  

        \subsubsection{Phase retrieval by exploiting the structure of the power flow Jacobian}
        Importantly, the formulation \eqref{eq:generic-nr-phase-recovery} assumes knowledge of the power-phase angle sensitivities $\DPDTH(\vx_t),\DQDTH(\vx_t)$, which is \emph{impossible} without knowledge of the network model or measurements of the form \eqref{eq:pmu_signal_def}. To resolve this, we can appeal to %the structural symmetries of the power flow Jacobian discussed in Section \ref{sec:jacobian_structure_review} that arises due to electric power system physics. 
        Lemma \ref{lemma:phaseless-symmetry}. This allows us to reconstruct the full power flow Jacobian without knowledge of the voltage phase angles by considering the matrix decision variables $\DPDTH(\vv_t,\vq_t),\DQDTH(\vv_t,\vp_t) \in \R^{n \times n}$ in the phase retrieval program as in \eqref{eq:augmented-phase-sensitivities}
        
        The structural symmetries of the power flow Jacobian \eqref{eq:augmented-phase-sensitivities} can be understood as variants of the structural constraints and the real-valued phase constraints described in \cite[Section 3.6.2]{waldspurger_phase_2015} and \cite[Section 3.6.4]{waldspurger_phase_2015}, respectively.  The phase retrieval program for samples $t=1,\dots,$ can then be written as
        \begin{equation}
        \label{eq:nr-phret-program-with-symmetry-constraints}
        \begin{split}
        \renewcommand*{\arraystretch}{1.5}
        %\boxed{
            \minimize_{\Delta \vtheta_t,
            \DPDTH,\DQDTH}   \quad
           \norm{
            \begin{bmatrix}
                \Delta \vp_t\\
                \Delta \vq_t
            \end{bmatrix}
            -
            \begin{bmatrix}
                \DPDTH(\vv_t,\vq_t) & \DPDV\\
                \DQDTH(\vv_t,\vp_t) & \DQDV
            \end{bmatrix}
            \begin{bmatrix}
                \Delta \vtheta_t\\
                \Delta \vv_t
            \end{bmatrix}
            }_2^2 \quad 
            \text{subject to:} \quad \eqref{eq:augmented-phase-sensitivities}, %\quad | \diag(\Delta \vv_t) \vu_t(\Delta \vtheta_t)| = \Delta \vv_t.
            %}
        \end{split}
        \end{equation}
        where, as $\Delta \vtheta_t \in (-\pi,\pi]^n$ by construction, we have dropped the modulus constraints. Modern solvers can solve the program \eqref{eq:nr-phret-program-with-symmetry-constraints}  efficiently. Note that if $\DPDTH(\vv_t,\vq_t)$ is known and invertible, there is a unique solution for the phase angles. Dropping the arguments for simplicity, we have
        \begin{equation}
        \label{eq:delta-theta-recovery-expression}
            \Delta \vtheta_t = \left(\DPDTH\right)^{-1}\left( \Delta \vp_t - \DPDV \Delta \vv_t \right),
            %-\left(\DPDTH\right)^{-1} \DPDV \Delta \vv_t + \left(\DPDTH\right)^{-1} \Delta \vp_t,
        \end{equation}
       because $ \Delta \vp_t = \DPDTH \Delta \vtheta_t + \DPDV \Delta \vv_t$, so $\DPDTH \Delta \vtheta_t = - \DPDV \Delta \vv_t + \Delta \vp_t$, and we have $\Delta \vtheta_t = -\DPDTH^{-1} \DPDV \Delta \vv_t + \DPDTH^{-1} \Delta \vp_t$ as desired. Similarly, if $\DQDTH(\vv_t,\vp_t)$ is invertible, by applying analogous steps for $\Delta \vq_t = \DQDTH \Delta \vtheta_t + \DQDV \Delta \vv_t$, we have
        \begin{equation}
            \label{eq:delta-theta-reactive-recovery-expression}
            \Delta \vtheta_t = \left( \DQDTH\right)^{-1}\left(  \Delta \vq_t -\DQDV \Delta \vv_t \right).
        \end{equation}

The expressions in \eqref{eq:thm1_expressions} and the program \eqref{eq:nr-phret-program-with-symmetry-constraints} enable the phase angle submatrices of the power flow Jacobian\textemdash and thus, the phase \textcolor{black}{angle} behavior itself\textemdash to be recovered using just the data in \eqref{eq:ami_signal_definition}. Furthermore, we will demonstrate the effectiveness of the  phase retrieval program \eqref{eq:nr-phret-program-with-symmetry-constraints} using estimated power-voltage magnitude sensitivities $\DPDV$ and $\DQDV$ in the numerical results in Section \ref{sec:numerical:baseline-comparison-knowlegdge-of-topology}.

  \subsubsection{Recovering the current phase angles}
  \label{sec:nr-phret-recovering-current-angles}
        For any guess for the voltage phase $\hat{\vtheta} \in (-\pi,\pi]^n$, let $\vvbar \triangleq \vv \angle \hat{\vtheta} \in \C^n$ denote the estimated complex voltage phasors. The complex current injection $\vell \in \C^n$ is related to $\vvbar$ and the complex power injection  $ \vs \triangleq \vp + j \vq \in \C^n$ through Ohm's law as $\vs = \vvbar \circ \conj{\vell}$, where $(\cdot)\circ(\cdot)$ denotes the elementwise multiplication of two vectors, and where $\conj{(\cdot)}$ denotes the elementwise complex conjugate.

        Once the voltage phase angles are recovered, all other quantities in the above relationship are given by the measurements in \eqref{eq:ami_signal_definition}. Therefore, we can compute an estimate for the complex current injections as %$\vell = \conj{\diag\left(\vv e^{j \hat{\vtheta}}\right)^{-1} \vs}$. 
        $\vell = \conj{\diag\left(\vvbar\right)^{-1} \vs}$. While we do not focus our estimation results on the current injections, we verify that this formulation is physically valid in Appendix \ref{apdx:phaseless-nrpf-iter}.

\subsection{Recovery guarantees based on Gershgorin discs}
\label{sec:sufficient-cond-unique}

The Gershgorin Circle Theorem provides regions of the complex plane where the eigenvalues of a matrix are guaranteed to lie. It has been applied in voltage stability assessment for siting and control of distributed energy resources  \cite{swartz_gershgorin_2022}, in an impedance-based stability requirement \cite{Liu_GERSHGORIN}, in small signal stability of swing equations \cite{amin_xu_stability}, and many other electric power systems applications. 

The motivations of using the Gershgorin Circle Theorem  to provide phase retrieval guarantees are twofold. First, the Theorem's  structure is intuitively similar to the power flow Jacobian symmetry expression \eqref{eq:thm1_expressions}. Second, the power flow Jacobian submatrices are empirically diagonally dominant, as we show numerically in Fig. \ref{fig:sensitivity-results} of the subsequent section. This property is favorable for applying the Gershgorin Circle Theorem.

Below, in Section \ref{sec:sufficient-voltage-phase-retrieval}, we apply Gershgorin to yield  sufficient conditions for the guaranteed recovery of the voltage phase angles. In Section \ref{sec:sufficient-condition-jac-invertible}, we apply a block matrix variant of Gershgorin to yield a sufficient condition for invertibility of the power flow Jacobian. The sufficient conditions of both Theorems provide these guarantees \emph{without} requiring any knowledge of the voltage phase angles $\vtheta$.

\subsubsection{Sufficient condition for guaranteed voltage phase retrieval}
\label{sec:sufficient-voltage-phase-retrieval}
\begin{theorem}
    \label{thm:disc-delta-theta}
    Let $\setB \subset \{1,\dots,n\}$ be a subset of PQ buses under study. Let $\vx = [\vtheta^T, \vv^T]^T \in \R^{2|\setB|}$ be a guess for the network state, let $\vp(\vx),\vq(\vx) \in \R^{|\setB|}$ be the observed power injections, and let $\DPDV(\vx),\DQDV(\vx) \in \R^{|\setB| \times |\setB|}$ be the power-voltage magnitude sensitivity matrices. For any perturbations in \eqref{eq:ami_signal_definition}, if for all $i \in \setB$
    \begin{subequations}
    \label{eq:thm1:q-conditions}
        \begin{align}
            \label{eq:thm1_row_offdiag}
       |q_i| &> \frac{1}{2} v_i \left( \sum_{k \in \setB \setminus\{i\}} \left| \frac{\partial q_k}{\partial v_i}\right| - \left| \frac{\partial q_i}{\partial v_i}\right|\right)\\
       \label{eq:thm1_column_offdiag} \textsf{or} \quad 
        |q_i| &> \frac{1}{2}\left(\sum_{k \in \setB \setminus \{i\}} v_k \left| \frac{\partial q_i}{\partial v_k}\right| - v_i \left| \frac{\partial q_i}{\partial v_i} \right| \right), 
        \end{align}
    \end{subequations}
    then there exists unique voltage phase angle perturbations $\Delta \vtheta$ such that \eqref{eq:delta-theta-recovery-expression} holds. Analogously, if for all $i \in \setB$
    \begin{subequations}
    \label{eq:thm1:p-conditions}
        \begin{align}
             |p_i| &> \frac{1}{2}v_i \left( \sum_{k \in \setB \setminus\{i\}} \left| \frac{\partial p_k}{\partial v_i}\right| - \left| \frac{\partial p_i}{\partial v_i} \right|\right)\\
             \textsf{or} \quad |p_i| &> \frac{1}{2} \left(\sum_{k \in \setB \setminus\{i\}} v_k \left| \frac{\partial p_i}{\partial v_k}\right| - v_i \left| \frac{\partial p_i}{\partial v_i}\right| \right),
        \end{align}
    \end{subequations}
    then there exists unique voltage phase angle perturbations $\Delta \vtheta$ such that \eqref{eq:delta-theta-reactive-recovery-expression} holds.
\end{theorem}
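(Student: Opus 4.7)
The plan is to reduce the claim to a nonsingularity statement about the recovered Jacobian blocks and then apply the Gershgorin Circle Theorem using the closed-form entries produced by Lemma \ref{lemma:phaseless-symmetry}. Existence of a unique $\Delta \vtheta$ solving \eqref{eq:delta-theta-recovery-expression} (respectively \eqref{eq:delta-theta-reactive-recovery-expression}) is equivalent to invertibility of $\DPDTH$ (respectively $\DQDTH$). By Lemma \ref{lemma:phaseless-symmetry} the entries of $\DPDTH$ are explicit, $(\DPDTH)_{ii} = v_i \partial q_i/\partial v_i - 2 q_i$ and $(\DPDTH)_{ik} = v_k \partial q_i/\partial v_k$ for $k \neq i$, with an analogous formula $\DQDTH = -\diag(\vv)\DPDV + 2\diag(\vp)$. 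This turns the problem into a direct Gershgorin computation on these two explicit matrices without any appeal to the (unknown) phase angles.

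I would invoke Gershgorin in both its row and column flavors: the eigenvalues of $\DPDTH$ lie in the union of row discs of radius $R_i^{\mathrm{row}} = \sum_{k \in \setB \setminus \{i\}} v_k |\partial q_i/\partial v_k|$ and also in the union of column discs of radius $R_i^{\mathrm{col}} = v_i \sum_{k \in \setB \setminus \{i\}} |\partial q_k/\partial v_i|$. The ``or'' between \eqref{eq:thm1_row_offdiag} and \eqref{eq:thm1_column_offdiag} in the statement reflects exactly this freedom: at each bus $i$ one may pick whichever radius is more favorable. Invertibility is then secured once $0$ is excluded from every chosen disc, which amounts to $|(\DPDTH)_{ii}| > R_i$ for all $i$. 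To recast this as a condition that separates $|q_i|$ from the diagonal of $\DQDV$, I would apply the reverse triangle inequality
\[
\left|v_i \tfrac{\partial q_i}{\partial v_i} - 2 q_i\right| \ \geq\ 2|q_i| - v_i\left|\tfrac{\partial q_i}{\partial v_i}\right|,
\]
substitute into $|(\DPDTH)_{ii}| > R_i$, and rearrange; this produces \eqref{eq:thm1_row_offdiag} or \eqref{eq:thm1_column_offdiag} depending on whether the row or column radius is used, delivering existence and uniqueness of $\Delta \vtheta$ in \eqref{eq:delta-theta-recovery-expression}.

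The second half of the theorem, which concerns the conditions \eqref{eq:thm1:p-conditions} and invertibility of $\DQDTH$, follows by the same template applied to $\DQDTH = -\diag(\vv)\DPDV + 2\diag(\vp)$; up to the obvious sign change the reverse-triangle-plus-Gershgorin argument is mechanical. The step I expect to require the most care is the reverse triangle bound itself: the sharp form $|a-b| \geq \bigl||a|-|b|\bigr|$ can yield either $2|q_i| - v_i|\partial q_i/\partial v_i|$ or $v_i|\partial q_i/\partial v_i| - 2|q_i|$ depending on which term dominates, so I would either split into those two cases or lean on the row/column flexibility of Gershgorin so that at least one of \eqref{eq:thm1_row_offdiag}--\eqref{eq:thm1_column_offdiag} is nontrivial at every bus. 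I would also remark that each condition is automatically satisfied whenever its right-hand side is non-positive---i.e.\ whenever $\DQDV$ (respectively $\DPDV$) is itself row- or column-diagonally dominant at index $i$---which aligns with the empirical diagonal dominance of the Jacobian blocks highlighted in Section \ref{sec:sufficient-cond-unique}.
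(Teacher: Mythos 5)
Your overall architecture is exactly the paper's: reduce \eqref{eq:delta-theta-recovery-expression} to invertibility of $\DPDTH$, use Lemma \ref{lemma:phaseless-symmetry} to write its entries in terms of $\vv$, $\vq$, and $\DQDV$, and exclude $0$ from the row or column Gershgorin discs. Your identification of the two radii is also correct: with $(\DPDTH)_{ik}=v_k\,\partial q_i/\partial v_k$, the row-$i$ radius is $\sum_{k\in\setB\setminus\{i\}}v_k|\partial q_i/\partial v_k|$ and the column-$i$ radius is $v_i\sum_{k\in\setB\setminus\{i\}}|\partial q_k/\partial v_i|$, matching the two alternatives in \eqref{eq:thm1:q-conditions}.

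The genuine gap is in the last step, and you half-noticed it yourself. The reverse triangle inequality gives the lower bound
\[
\left|v_i\tfrac{\partial q_i}{\partial v_i}-2q_i\right|\;\geq\;2|q_i|-v_i\left|\tfrac{\partial q_i}{\partial v_i}\right|,
\]
so forcing this lower bound to exceed the off-diagonal radius $R_i$ yields $|q_i|>\tfrac12\bigl(R_i+v_i|\partial q_i/\partial v_i|\bigr)$ --- a \emph{plus} sign in front of the diagonal term, whereas \eqref{eq:thm1_row_offdiag} and \eqref{eq:thm1_column_offdiag} have a \emph{minus} sign. No rearrangement of the reverse-triangle bound produces the stated conditions; what you actually prove is a strictly stronger hypothesis than the theorem asserts. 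The paper arrives at the minus sign differently: it starts from the diagonal-dominance condition $|v_i\partial q_i/\partial v_i-2q_i|>R_i$ and applies the \emph{forward} triangle inequality $|a-b|\leq|a|+|b|$ to the left-hand side, obtaining $R_i<v_i|\partial q_i/\partial v_i|+2|q_i|$, which rearranges to \eqref{eq:thm1_row_offdiag}. Note, however, that this chain derives a \emph{consequence} of diagonal dominance rather than a condition implying it, so your plus-sign variant is the one that rigorously guarantees $0\notin\setG_i(\DPDTH)$; to reproduce the theorem's minus-sign inequalities you must follow the paper's forward-triangle route and accept that it relaxes the dominance condition. Either way, you cannot leave the sign ambiguity unresolved ("split into those two cases or lean on the row/column flexibility" does not fix it, since the discrepancy is between the bound you can prove and the inequality you are asked to reach, not between the row and column discs). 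The same remark applies verbatim to the $\DQDTH$ half with $(\DQDTH)_{ii}=-v_i\partial p_i/\partial v_i+2p_i$.
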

The proof of Theorem \ref{thm:disc-delta-theta} can be found in Appendix \ref{apdx:thm1-proof}.

Theorem \ref{thm:disc-delta-theta} is valuable because it depends solely on measurements of the form \eqref{eq:ami_signal_definition}, and the matrices $\DPDV,\DQDV$, which can be efficiently estimated from \eqref{eq:ami_signal_definition} using well-studied methods. Theorem \ref{thm:disc-delta-theta} gives explicit bounds on the magnitude of the active or reactive power injection at each bus in a chosen set of buses $\setB$, such that the phase angles at these buses can be guaranteed to be uniquely recovered. 

Furthermore, the bounds are each functions of only the voltage magnitudes and one of \emph{either} the active or reactive power injections, respectively.  If the conditions in Theorem \ref{thm:disc-delta-theta} hold, then $\Delta \vtheta$ can be uniquely recovered via either the expression in \eqref{eq:delta-theta-recovery-expression} or \eqref{eq:delta-theta-reactive-recovery-expression}, which are equivalent by construction.

\subsubsection{Sufficient condition for power flow Jacobian invertibility and guaranteed voltage phase retrieval}
\label{sec:sufficient-condition-jac-invertible}
We can establish conditions for the invertibility of the \emph{full} power flow Jacobian without knowledge of the phase angles by appealing to an analogue of the Gershgorin Circle Theorem for block matrices \cite{feingold_block_1962,echeverria_block_2018}. To yield this result, we introduce the following assumption:
\begin{assumption}
\label{assum:full-rank-diagonal-blocks}
The matrices $\DPDTH(\vv,\vq)$ and $\DQDV$ are full rank. %i.e., $\rank\left(\DPDTH(\vv,\vq)\right) = \rank\left(\DQDV\right) = n$.
\end{assumption}

Assumption \ref{assum:full-rank-diagonal-blocks} is not restrictive, as it only requires that the diagonal blocks of the Jacobian are invertible, without certification of the invertibility of the entire Jacobian. Moreover, we show in Table \ref{tab:disc-thm2} of Appendix \ref{apdx:thm2-proof} that the assumption holds for all the test cases we study, including very large-scale, realistic test cases. With Assumption \ref{assum:full-rank-diagonal-blocks} in hand, we can use the matrices \eqref{eq:augmented-phase-sensitivities} to yield the following condition, which guarantees the non-singularity of the full Jacobian given the measurements in \eqref{eq:ami_signal_definition}.

\begin{theorem}
\label{thm:disc}
%Suppose that \eqref{eq:thm1_row_offdiag} or \eqref{eq:thm1_column_offdiag} holds and $\DQDV$ is invertible. 
Let $\setB \subset \{1,\dots,n\}$ be a subset of PQ buses under study with a guess for the network state $\vx$ and injections $\vp(\vx),\vq(\vx)$ as in Theorem \ref{thm:disc-delta-theta}. Consider the voltage phase angle sensitivity matrix functions $\DPDTH(\vv,\vq)$ and $\DQDTH(\vv,\vp)$ as defined in \eqref{eq:augmented-phase-sensitivities}. Suppose that Assumption \ref{assum:full-rank-diagonal-blocks} holds. Then, if
\begin{subequations}
    \begin{align}
    \label{eq:thm2_row_offdiag}
        \norm{\left(\DPDTH(\vv,\vq)\right)^{-1} \DPDV}_2 < 1 \quad \textsf{and} \quad \norm{\left(\DQDV\right)^{-1} \DQDTH(\vv,\vp)}_2 <1,\\
     \label{eq:thm2_column_offdiag}
    \textsf{or,} \quad  \norm{\left(\DPDTH(\vv,\vq)\right)^{-1} \DQDTH(\vv,\vp)}_2 <1  \quad \textsf{and} \quad \norm{\left(\DQDV\right)^{-1} \DPDV}_2 < 1,
    \end{align}
\end{subequations}
then the power flow Jacobian $\mJ(\vx) \in \R^{2|\setB| \times 2|\setB|}$ is non-singular, i.e., $\rank\left(\mJ(\vx)\right) = 2|\setB|$. Thus, the network is not in a state of voltage collapse \textcolor{black}{due to Jacobian singularity, (e.g., saddle node bifurcation),} and there exists unique voltage phase angle perturbations $\Delta \vtheta$ such that $[\Delta\vtheta^T, \Delta \vv^T]^T = \mJ(\vx)^{-1}[\Delta \vp^T,\Delta \vq^T]^T$ for any perturbations in \eqref{eq:ami_signal_definition}.
\end{theorem}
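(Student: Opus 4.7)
The plan is to apply a block-matrix extension of the Gershgorin Circle Theorem \cite{feingold_block_1962,echeverria_block_2018} to the Jacobian, expressed through Lemma \ref{lemma:phaseless-symmetry} in the phase-free block form
$$\mJ(\vx) \;=\; \begin{bmatrix} A & B \\ C & D \end{bmatrix}, \qquad A \triangleq \DPDTH(\vv,\vq),\ B\triangleq\DPDV,\ C\triangleq\DQDTH(\vv,\vp),\ D\triangleq\DQDV.$$
Assumption \ref{assum:full-rank-diagonal-blocks} guarantees that the diagonal blocks $A$ and $D$ are invertible, which is the prerequisite for a block Gershgorin argument using weights of the form $\norm{A^{-1}(\cdot)}_2$ and $\norm{D^{-1}(\cdot)}_2$. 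Under the hypotheses of the theorem, the goal is to certify $\rank(\mJ(\vx)) = 2|\setB|$ using \emph{only} the phase-free quantities from \eqref{eq:ami_signal_definition} and the magnitude-sensitivity blocks.

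For the row condition \eqref{eq:thm2_row_offdiag}, the cleanest route is the Schur complement of $A$: the identity $\det\mJ(\vx) = \det(A)\cdot\det(D - CA^{-1}B)$ reduces nonsingularity of $\mJ(\vx)$ to that of $D - CA^{-1}B = D(\mId - D^{-1} C A^{-1} B)$, and the submultiplicative bound $\norm{D^{-1}CA^{-1}B}_2 \le \norm{D^{-1}C}_2\,\norm{A^{-1}B}_2 < 1$ places the latter factor within the convergence radius of the Neumann series, so the Schur complement is invertible. For the column condition \eqref{eq:thm2_column_offdiag}, I would use the dual Schur complement of $D$, writing $A - BD^{-1}C = A(\mId - A^{-1} B D^{-1} C)$, and then invoke the refined block comparison-matrix construction of \cite{echeverria_block_2018}. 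In that construction a scalar $2 \times 2$ comparison matrix is formed with unit diagonal and weak-dominance off-diagonal weights $\norm{A^{-1}C}_2$ and $\norm{D^{-1}B}_2$; it is a nonsingular $M$-matrix precisely when both weights lie strictly below one, and this in turn certifies nonsingularity of $\mJ(\vx)$. Once either route furnishes invertibility, the unique solution $[\Delta\vtheta^T,\Delta\vv^T]^T = \mJ(\vx)^{-1}[\Delta\vp^T,\Delta\vq^T]^T$ exists, and absence of a Jacobian-singularity saddle-node bifurcation in \eqref{eq:power_balance} follows from the standard singular-Jacobian characterization of voltage collapse.

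The main obstacle is the column case. A naive Schur-complement bound most directly yields the product $\norm{A^{-1}B}_2\,\norm{D^{-1}C}_2$ or, after transposing $\mJ(\vx)$ and applying row-wise block Gershgorin, the product $\norm{CA^{-1}}_2\,\norm{BD^{-1}}_2$; neither matches the weights $\norm{A^{-1}C}_2,\ \norm{D^{-1}B}_2$ claimed in \eqref{eq:thm2_column_offdiag}. Recovering those weights requires the Echeverria et al.\ comparison-matrix refinement, in which each off-diagonal block is measured through the inverse of \emph{its own column's} diagonal block rather than its row's; setting up that comparison matrix, combining it with the Lemma \ref{lemma:phaseless-symmetry} substitutions for $A$ and $C$, and verifying positivity of the resulting $M$-matrix is the step I expect to be the most delicate piece of bookkeeping when fleshing out the appendix proof.
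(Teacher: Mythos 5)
Your Schur-complement argument for the row condition \eqref{eq:thm2_row_offdiag} is correct and complete: with $A=\DPDTH(\vv,\vq)$, $B=\DPDV$, $C=\DQDTH(\vv,\vp)$, $D=\DQDV$, Assumption \ref{assum:full-rank-diagonal-blocks} licenses the factorization $\det\mJ=\det(A)\det\bigl(D(\mId-D^{-1}CA^{-1}B)\bigr)$, and the bound $\norm{D^{-1}CA^{-1}B}_2\le\norm{D^{-1}C}_2\,\norm{A^{-1}B}_2<1$ makes the Neumann factor invertible. The paper instead invokes the block Gershgorin corollary of \cite{echeverria_block_2018} directly (if $0$ were an eigenvalue, one of $\norm{A^{-1}B}_2\ge 1$ or $\norm{D^{-1}C}_2\ge 1$ would have to hold); your route is more self-contained and in fact establishes the slightly stronger statement that the \emph{product} of the two norms being below one already suffices.

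The column condition is where the genuine gap sits, and your diagnosis of the obstacle is sharper than the paper's own treatment: the paper dispatches \eqref{eq:thm2_column_offdiag} with the sentence ``analogous arguments can be applied to $\mJ_{\#}^T$,'' but row block Gershgorin applied to $\mJ^T$ yields the weights $\norm{(A^T)^{-1}C^T}_2=\norm{CA^{-1}}_2$ and $\norm{(D^T)^{-1}B^T}_2=\norm{BD^{-1}}_2$, i.e.\ with the inverses acting on the \emph{right}, not the left-inverted weights $\norm{A^{-1}C}_2,\norm{D^{-1}B}_2$ that appear in the statement. Unfortunately, the comparison-matrix refinement you hope will recover those weights cannot exist as a general matrix fact: the implication ``$\norm{A^{-1}C}_2<1$ and $\norm{D^{-1}B}_2<1$ $\Rightarrow$ $\mJ$ nonsingular'' is false even with $A,D$ invertible. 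For instance, with $|\setB|=2$ take
\begin{equation*}
A=\mId,\qquad D=\diag\left(\tfrac{10}{9},\tfrac{9}{10}\right),\qquad B=\begin{bmatrix}0&1\\0&0\end{bmatrix},\qquad C=\tfrac{9}{10}\begin{bmatrix}0&0\\1&0\end{bmatrix};
\end{equation*}
then $\norm{A^{-1}C}_2=\norm{D^{-1}B}_2=\tfrac{9}{10}<1$, yet $\begin{bmatrix}A&B\\C&D\end{bmatrix}[-1,0,0,1]^T=\vzero$, so the block matrix is singular. Since neither your proposal nor the paper's proof uses any structure of the power flow Jacobian beyond invertibility of its diagonal blocks, no amount of bookkeeping with those weights will close this branch: either the condition must be restated with the inverses multiplying on the right, $\norm{\DQDTH(\vv,\vp)\,(\DPDTH(\vv,\vq))^{-1}}_2<1$ and $\norm{\DPDV\,(\DQDV)^{-1}}_2<1$ (which the transpose argument delivers immediately), or a genuinely power-flow-specific property must be identified and invoked. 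As a minor point, your claim that the $2\times 2$ comparison matrix is a nonsingular $M$-matrix ``precisely when both weights lie strictly below one'' is also off\textemdash the correct criterion is that the product of the off-diagonal weights is below one.
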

The proof of Theorem \ref{thm:disc} can be found in Appendix \ref{apdx:thm2-proof}.

Theorem \ref{thm:disc} is a stronger condition than that of Theorem \ref{thm:disc-delta-theta}. If Theorem \ref{thm:disc} holds, then $\mJ(\vx)$ is guaranteed to be invertible. This is also a sufficient condition for the network to \emph{not} be in a state of voltage collapse due to power flow Jacobian singularity, see \cite{hiskens_singularity,grijalva_singularity,grijalva_individual_2012} for more details. However, note that Theorem \ref{thm:disc} requires all the measurements in \eqref{eq:ami_signal_definition} for each condition. In contrast, each condition of Theorem \ref{thm:disc-delta-theta} only requires the voltage magnitudes and one of either the reactive power or active power injections.

\section{Numerical results}

    In this section, we numerically validate the proposed algorithms developed in this paper. A Julia package based on \texttt{PowerModels.jl} \cite{powermodels} has been developed for this work, and is publicly available.
    \begin{center}
    \url{https://github.com/samtalki/PowerPhaseRetrieval.jl}
    \end{center}
    We first present numerical analyses of the recovery guarantees in Section \ref{sec:numres:recovery-guarantees}.
    Then, we present numerical experiments for the phase retrieval program \eqref{eq:nr-phret-program-with-symmetry-constraints}. In Section \ref{sec:phret-results}, we  perform phase retrieval for a single time point at a nominal operating point specified by the network's AC power flow solution, with noise added. For this reason, the subscript $t$ is often dropped. In Section \ref{sec:impact-measurement-nonidealities}, actual time-series measurement data are used to apply the algorithm sequentially, including nonidealities such as asynchronous measurements, delays, and noise.

    \subsection{Validation of recovery guarantees}
    \label{sec:numres:recovery-guarantees}
    In addition to the proofs of Theorem \ref{thm:disc-delta-theta} and Theorem \ref{thm:disc} that are available in  Appendix \ref{apdx:thm1-proof} and Appendix \ref{apdx:thm2-proof}, respectively, we report numerical validation of these results for several \textsc{Matpower} \cite{zimmerman_matpower} test networks. \textcolor{black}{This includes the widely used \texttt{RTS\_GMLC} \cite{barrows_ieee_2019} test network, and the large-scale, realistic Texas A\&M synthetic grids \cite{birchfield_grid_2017}}. The quantities used for validating the theorem were computed with \texttt{PowerModels.jl} \cite{powermodels}. 
    
    In Table \ref{tab:disc-thm1}, we show the percentage of the PQ buses that are in the largest subset of PQ buses $\setB$ that satisfy Theorem \ref{thm:disc-delta-theta}. For the buses that do not satisfy Theorem \ref{thm:disc-delta-theta}, we show the maximum length $r_{\sf worst}>0$ of the radii segment of the discs that crosses into the left-hand side of the complex plane.

    In Table \ref{tab:disc-thm2}, we show the percentage of the PQ buses that are in the largest subset of PQ buses $\setB$ that satisfy Theorem \ref{thm:disc} (see the end of the proof of Theorem \ref{thm:disc} in Appendix \ref{apdx:thm2-proof} for more details). For each test network, regardless of whether Theorem \ref{thm:disc} is satisfied for the entire set of PQ buses, we show the maximum spectral norm of the four matrix arguments in \eqref{eq:thm2_row_offdiag}, \eqref{eq:thm2_column_offdiag}.

\begin{table}[th]
    \centering
    \begin{tabular}{|c|c|c|c|}
    \hline
         Case & \# PQ Buses & \% Satisfying Thm. \ref{thm:disc-delta-theta} & $r_{\sf worst}$  \\
         \hline
         \texttt{14} & 9 & 100.0\% & \textemdash\\
         \texttt{24\_ieee\_rts} & 13 & 100.0\% & \textemdash\\
         \texttt{ieee30} & 24 & 95.83\% & $1.4 \times 10^{-14}$\\
         \texttt{RTS\_GMLC} & 40 & 100.0\% & \textemdash\\
         \texttt{118} & 64 & 100.0\% & \textemdash\\
        \texttt{89pegase} & 77 & 94.81\% & 8.32\\ %\textemdash\\
         \texttt{ACTIVSg200} & 162 & 96.91\% & 0.088\\
         \texttt{ACTIVSg500} & 444 & 94.37\% & 3.014\\%$1.7 \times 10^{-13}$\\
         \texttt{ACTIVSg2000} & 1608 & 84.83\% & 28.31\\%$9.1 \times 10^{-13}$\\
        \hline
    \end{tabular}
    \caption{Analysis of Theorem \ref{thm:disc-delta-theta} for PQ buses of various test cases}
    \label{tab:disc-thm1}
\end{table}

\begin{table}[th]
    \centering
    \begin{tabular}{|c|c|c|c|}
    \hline
         Case & \# PQ Buses & \% Satisfying Thm. \ref{thm:disc} & $\sigma_{\sf max}$  \\
         \hline
         \texttt{14} & 9 & 100.0\% & 0.876\\
         \texttt{24\_ieee\_rts} & 13 & 100.0\% & 0.401 \\
         \texttt{ieee30} & 24 & 95.83\% & 1.437 \\
         \texttt{RTS\_GMLC} & 40 & 100.0\% & 0.444\\
         \texttt{118} & 64 & 100.0\% & 0.473 \\
        \texttt{89pegase} & 77 & 100\% & 0.954\\
         \texttt{ACTIVSg200} & 162 & 100\% & 0.698 \\
         \texttt{ACTIVSg500} & 444 & 99.77\% & 1.090\\
         \texttt{ACTIVSg2000} & 1608 & 99.69\% & 1.180 \\
        \hline
    \end{tabular}
    \caption{Analysis of Theorem \ref{thm:disc} for PQ buses of various test cases}
    \label{tab:disc-thm2}
\end{table}

    \subsection{Recovery of voltage phase angles and power-voltage phase angle sensitivities} 
    \label{sec:phret-results}

    \subsubsection{Simulated measurements}
    \label{sec:simultaned-measurements}
    To test the phase retrieval program \eqref{eq:nr-phret-program-with-symmetry-constraints} on several standard \textsc{Matpower} \cite{zimmerman_matpower} test cases, we simulate noisy measurements. We used \texttt{PowerModels.jl} \cite{powermodels} to solve the AC power flow equations for each test case, yielding a nominal solution for the grid state $\vx^* = [\vtheta^T,\vv^T]^T$, with corresponding power flow Jacobian $\mJ(\vx^*)$. To provide a simulation of the measurements in \eqref{eq:ami_signal_definition}, we computed observed injection perturbations as
    \begin{equation}
    \label{eq:noisy-mismatches}
        \Delta \vg^{\sf obs}(\vx^*) \triangleq \mJ(\vx^*) \vx^* + \begin{bmatrix} \vxi^p \\ \vxi^p \end{bmatrix},
    \end{equation}
    where the entries of the noise vectors $\vxi^p,\vxi^q  \in \R^n$ are distributed as $\xi^p_i,\xi_i^q \sim \setN(0,\sigma_{\sf meas}^2)$ for all $i \in \{1,\dots,n\}$. We then computed the state perturbations as $\Delta \vx^{\sf obs} \triangleq \mJ(\vx^*)^{-1}\Delta \vg^{\sf obs}$.

    \subsubsection{Recovery of voltage phase angles}
    \label{sec:numerical-ph-angle-recovery}
    We solved the phase retrieval program \eqref{eq:nr-phret-program-with-symmetry-constraints} for various \textsc{Matpower} \cite{zimmerman_matpower} test networks: the IEEE RTS 24-bus test case, the IEEE 30-bus test case, and the Grid Modernization Laboratory Consortium Reliability Test System (\texttt{RTS\_GMLC}) case \cite{barrows_ieee_2019}. In Figure \ref{fig:phret_results_by_noise}, we show the estimated PQ bus voltage phase angles at the AC power flow solution of \texttt{RTS\_GMLC} for multiple noise levels, as in \eqref{eq:noisy-mismatches}. The ground truth bus voltage phase angles are shown in black. The results verify that the phase retrieval program \eqref{eq:nr-phret-program-with-symmetry-constraints} is effective and robust to noise. Additional experiments for other test cases are available in Appendix \ref{apdx:additional-numres}.
    \begin{figure}[htb]
        \centering
        % \includegraphics[width=0.495\linewidth,keepaspectratio]{}
        % \hfill
        % \includegraphics[width=0.495\linewidth,keepaspectratio]{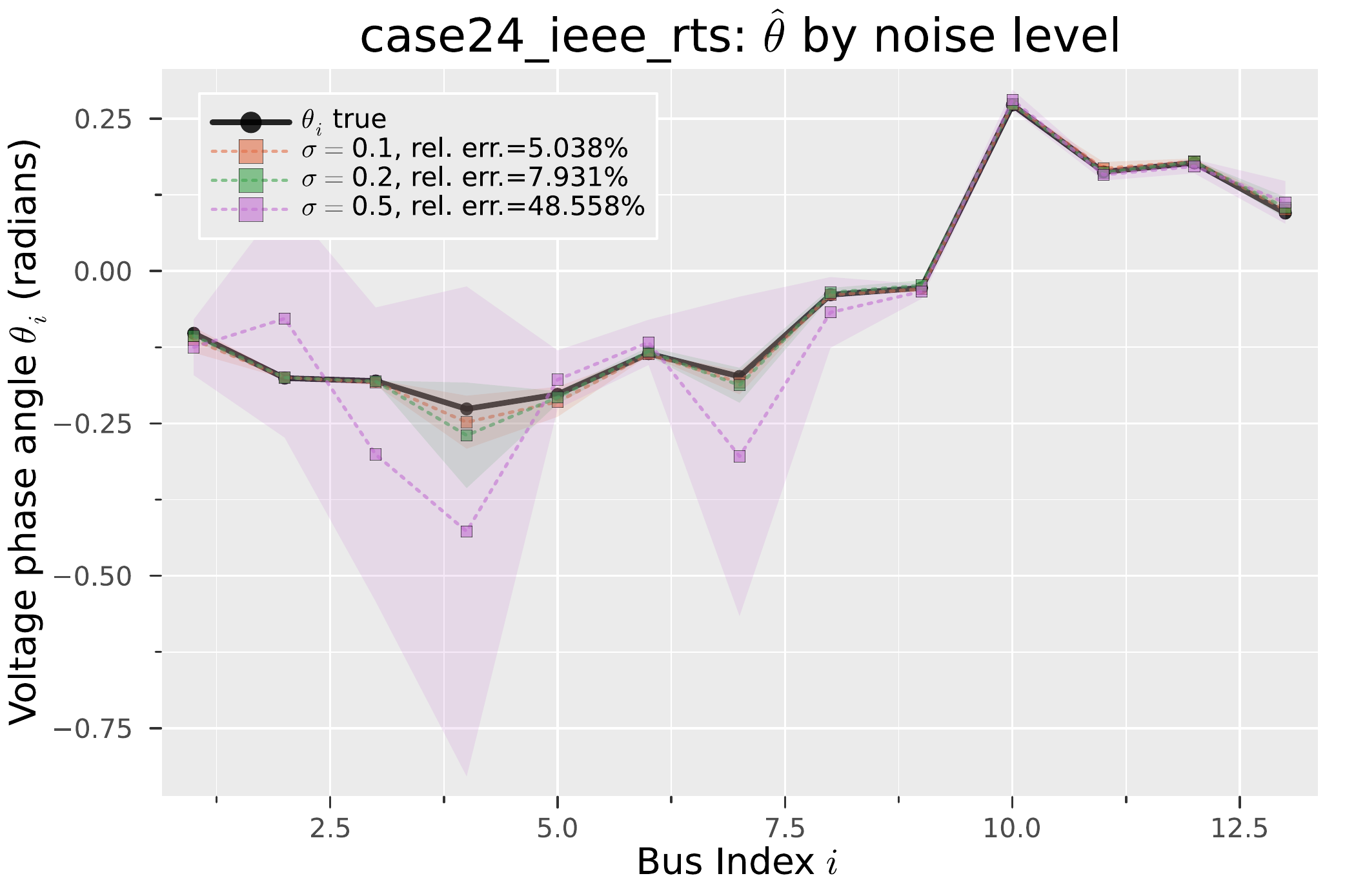}
        % \includegraphics[width=0.495\linewidth,keepaspectratio]
        % {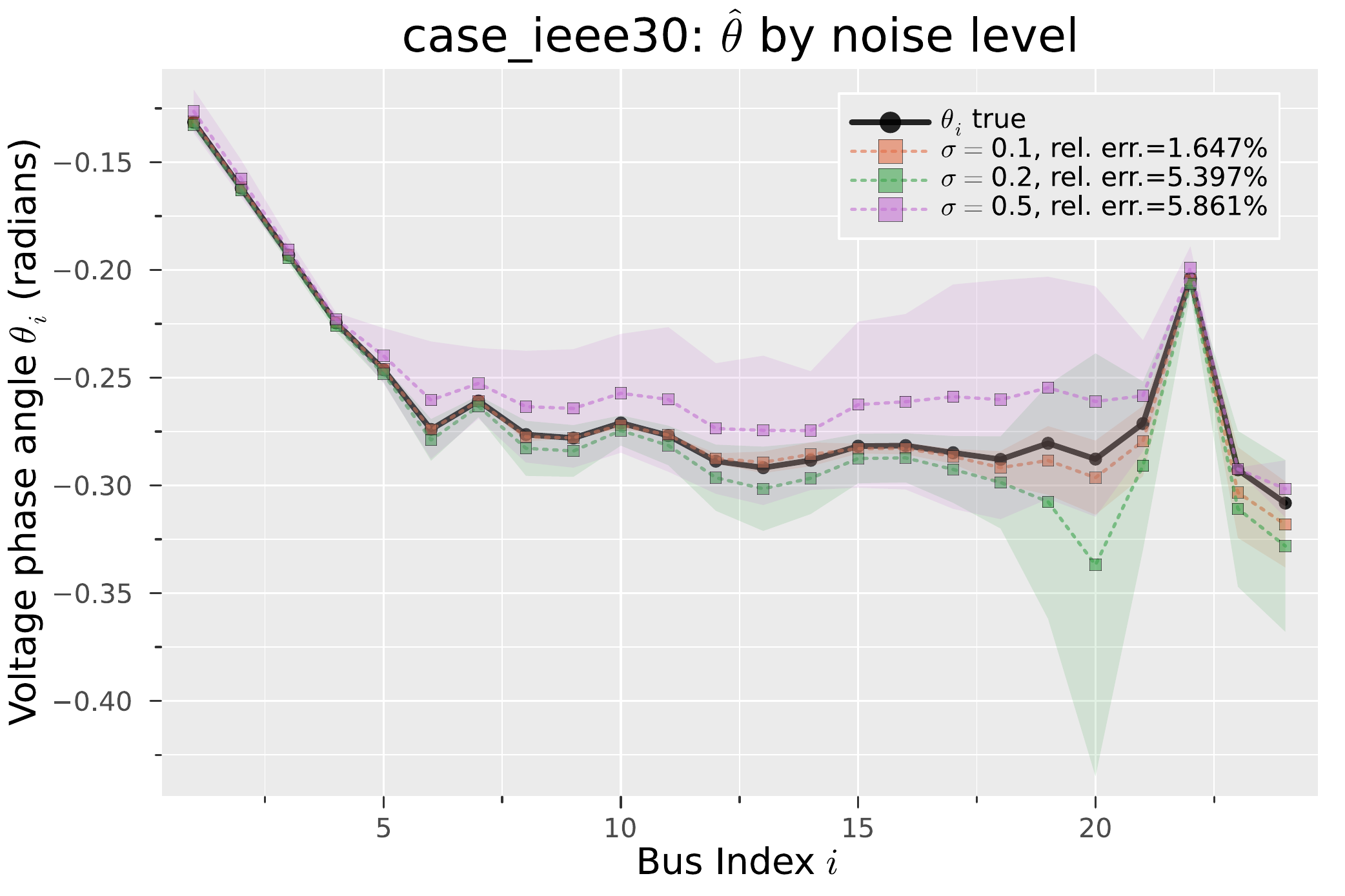}
        % \hfill
        \includegraphics[width=0.55\linewidth,keepaspectratio]{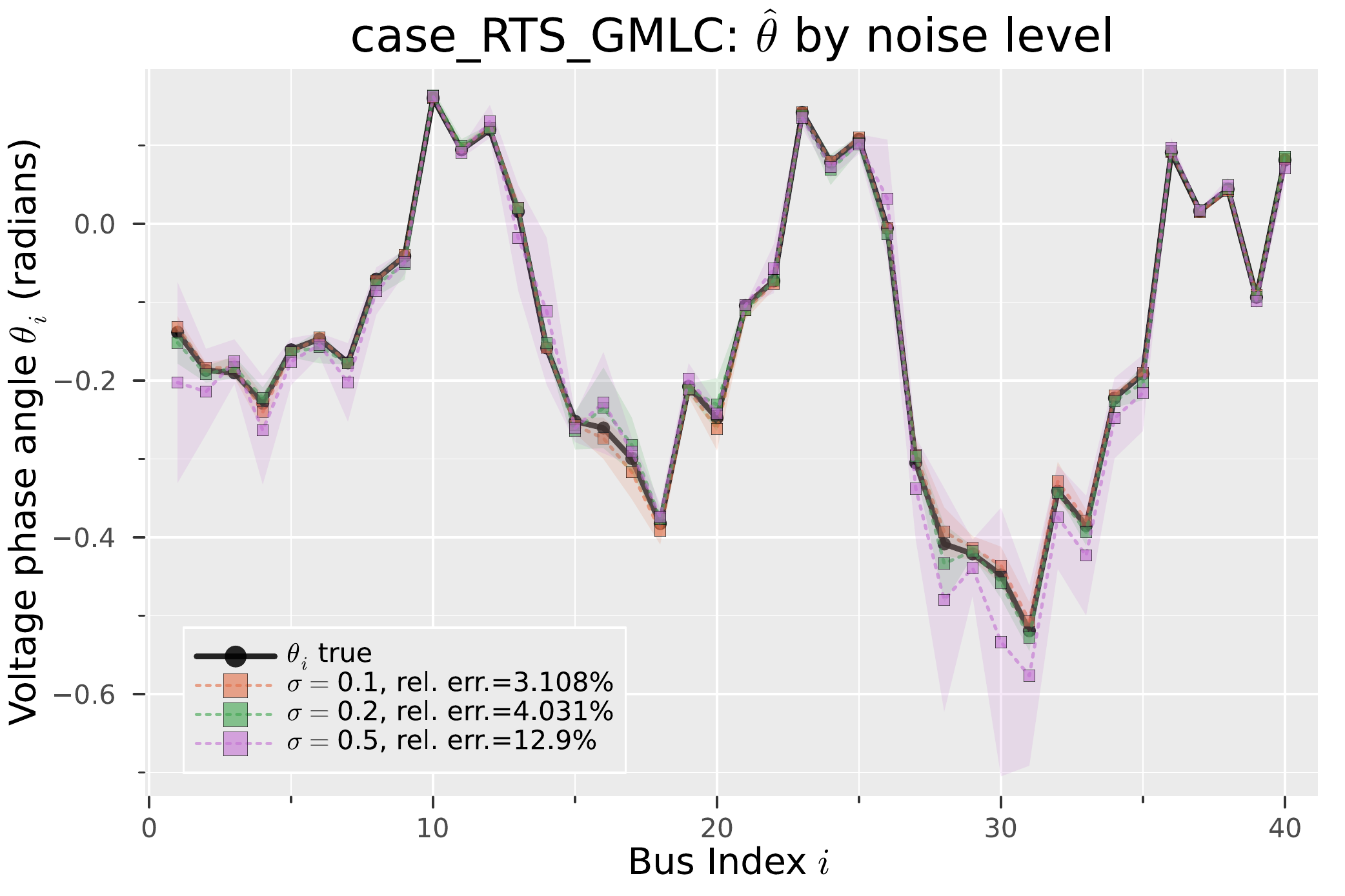}
        \caption{Performance of \eqref{eq:nr-phret-program-with-symmetry-constraints} by measurement noise level. The shaded ribbon indicates a $\pm 2 e$ error interval where $e$ is the absolute bus voltage phase error.}
        \label{fig:phret_results_by_noise}
    \end{figure}

\subsubsection{Recovery of the power-voltage phase angle sensitivity matrices}
    \label{sec:sensitivity-results}
        The structural symmetries of the power flow Jacobian \eqref{eq:thm1_expressions}  that were derived in Section \ref{sec:jacobian_structure_review} allow us to recover the power-voltage phase angle sensitivity matrices, i.e., the two unknown blocks of the power flow Jacobian. This is achieved by simultaneously recovering the phase angle perturbations and the Jacobian submatrix decision variables $\DPDTH,\DQDTH$ in the phase retrieval program \eqref{eq:nr-phret-program-with-symmetry-constraints}, subject to the structural constraints \eqref{eq:thm1_expressions}. In Fig. \ref{fig:sensitivity-results}, we show the optimal values of the power-voltage phase angle sensitivity matrices obtained by solving the phase retrieval program \eqref{eq:nr-phret-program-with-symmetry-constraints}, compared with the deterministic matrices defined in \eqref{eq:vph_jac_blocks}.
        \begin{figure}[t!]
            \centering
            % \includegraphics[width=0.49\linewidth,keepaspectratio]{}
            % \hfill
            % \includegraphics[width=0.49\linewidth,keepaspectratio]{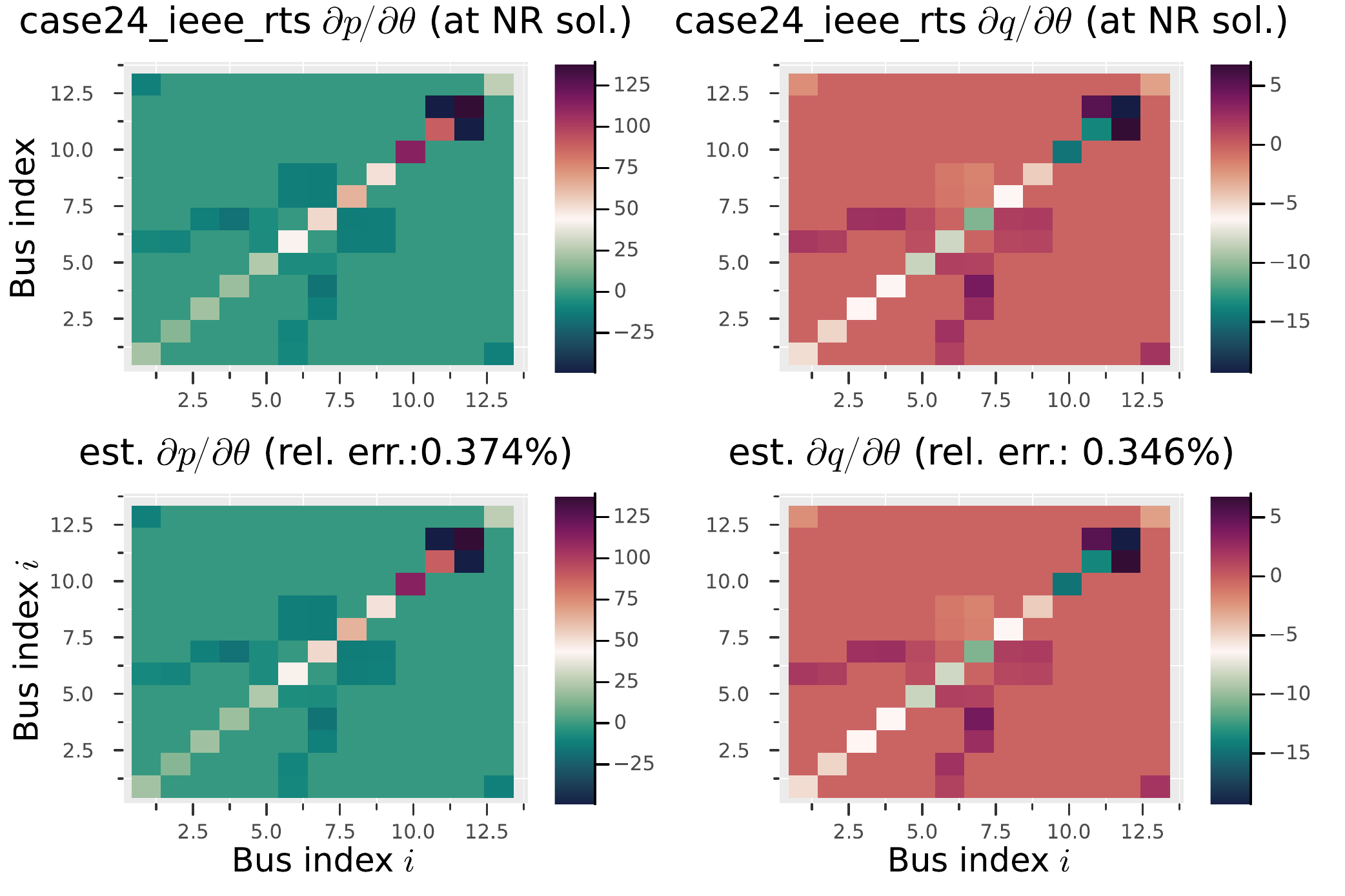}
            % \includegraphics[width=0.49\linewidth,keepaspectratio]{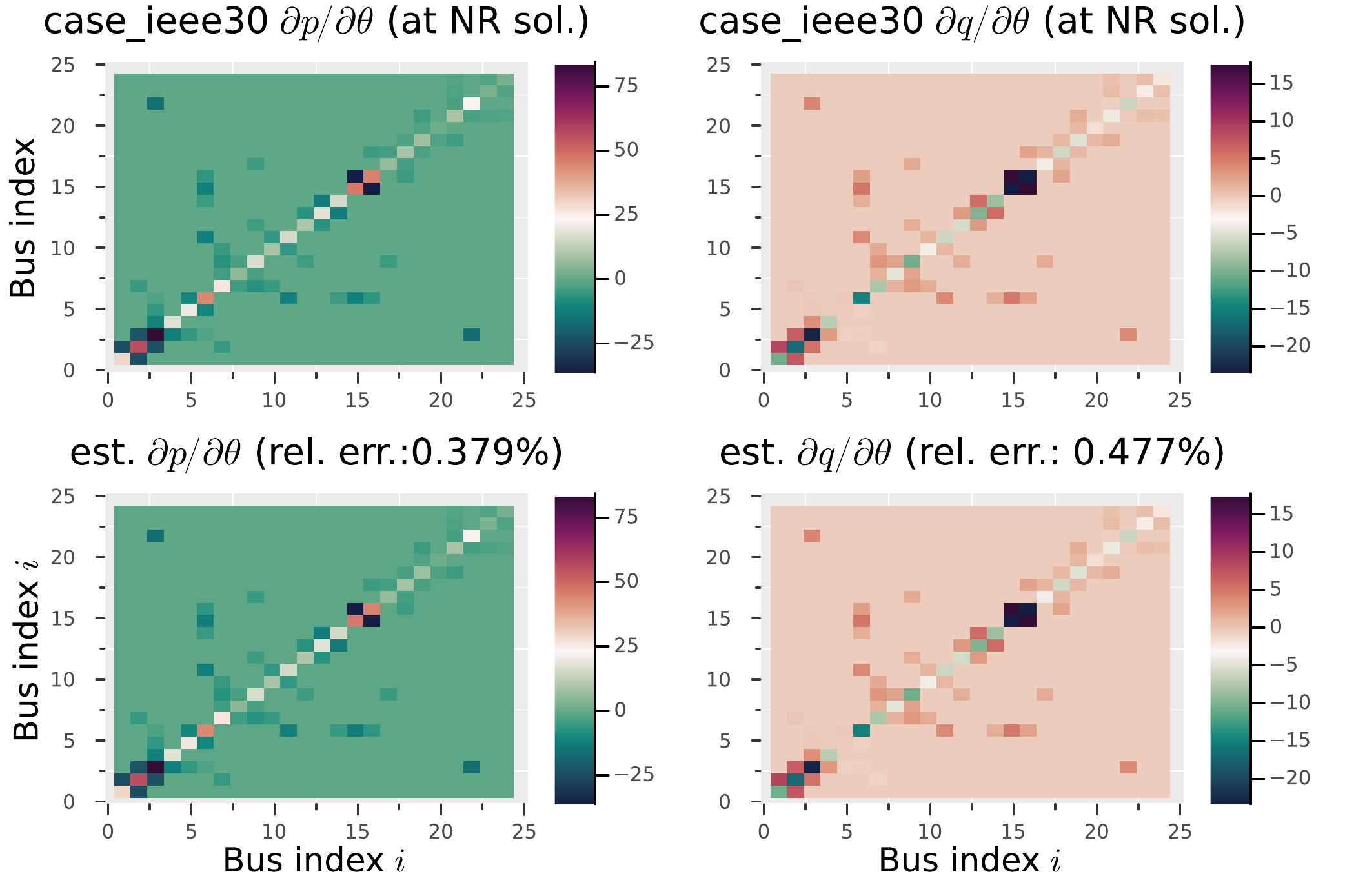}
            % \hfill
            \includegraphics[width=0.55\linewidth,keepaspectratio]{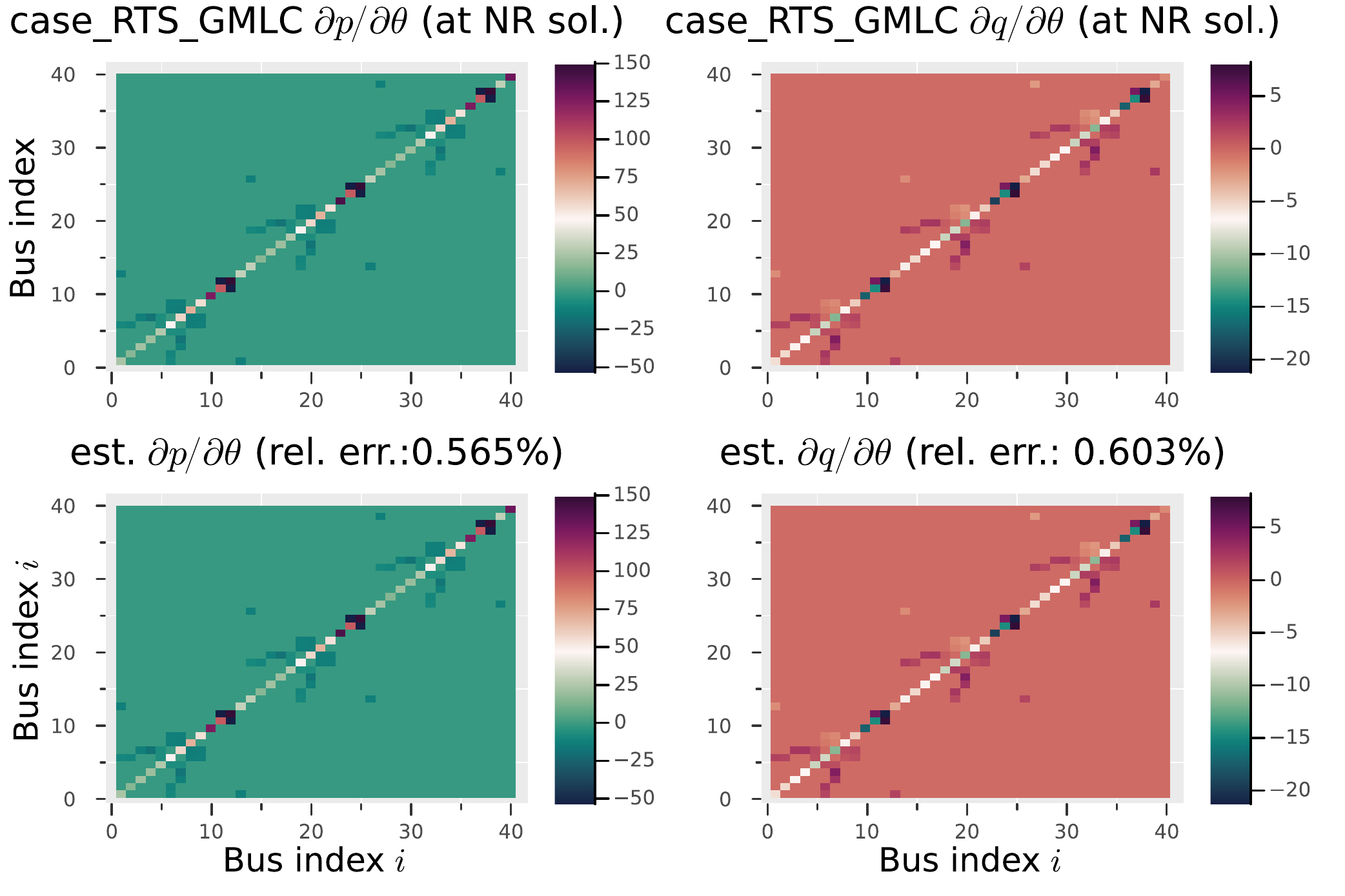}
            \caption{Recovery of the power-phase angle submatrices $\DPDTH,\DQDTH$ of the power flow Jacobian for the \texttt{RTS\_GMLC} network via the phase retrieval program \eqref{eq:nr-phret-program-with-symmetry-constraints} with $\sigma_{\sf meas} = 0.1$. Bus indices correspond to PQ bus ordering of the basic network format of \texttt{PowerModels.jl} \cite{powermodels}. Frobenius norm relative error is displayed on the second row.}
            \label{fig:sensitivity-results}
        \end{figure}

    \subsubsection{Robustness to errors in the estimates of the power-voltage magnitude sensitivity matrices}
    \label{sec:phret-results-robust}
        We introduce errors to the power-voltage magnitude sensitivity matrices to simulate the application of the phase retrieval program \eqref{eq:nr-phret-program-with-symmetry-constraints} using sensitivity matrices $\DPDV,\DQDV$ that were pre-estimated from past measurements of \eqref{eq:ami_signal_definition}. These errors are simulated by perturbing the deterministic, model-derived matrices with an additive Gaussian random matrix as
        \begin{equation}
        \label{eq:simulated-jacobian-estimation-error}
            \DPDV^{\sf obs} = \DPDV(\vx^*) + \mXi^{\sf pv}, \quad 
            \textsf{and} \quad \DQDV^{\sf obs} = \DQDV(\vx^*) + \mXi^{\sf qv},
        \end{equation}
        where $\Xi^{\sf pv}_{i,k},\Xi^{\sf qv}_{i,k} \sim \setN(0,\sigma^2_{\sf jac})$ for all $i,k \in \{1,\dots,n\}$. Matrices $\DPDV^{\sf obs}$ and $\DQDV^{\sf obs}$ are used as inputs in the phase retrieval program \eqref{eq:nr-phret-program-with-symmetry-constraints} and the power-phase angle sensitivity functions \eqref{eq:augmented-phase-sensitivities} in Section \ref{sec:impact-measurement-nonidealities}.

    \subsection{Impact of measurement nonidealities}
    \label{sec:impact-measurement-nonidealities}
        In practice, several nonidealities may be present in measurements of the form \eqref{eq:ami_signal_definition}, such as:
        \begin{enumerate}
            \item The measurements may not be instantaneous. They may be averaged measurements over a particular sampling period, such as one hour. 
            \item  The measurements may not be time-synchronized. They may possess delays and asynchronous sampling.
        \end{enumerate}
        Below, we assess the impact of these nonidealities on the proposed phase retrieval scheme. In Section \ref{sec:impact-time-synch}, we subsample measurements provided by \cite{barrows_ieee_2019}, and also propose a model to analyze the impact of random delays and asynchronous measurements on the method. Simulations are then provided in Section \ref{sec:nonideal-meas-simulations}.
        
 \begin{figure*}[!t]
            \centering
            \includegraphics[width=0.49\linewidth,keepaspectratio]{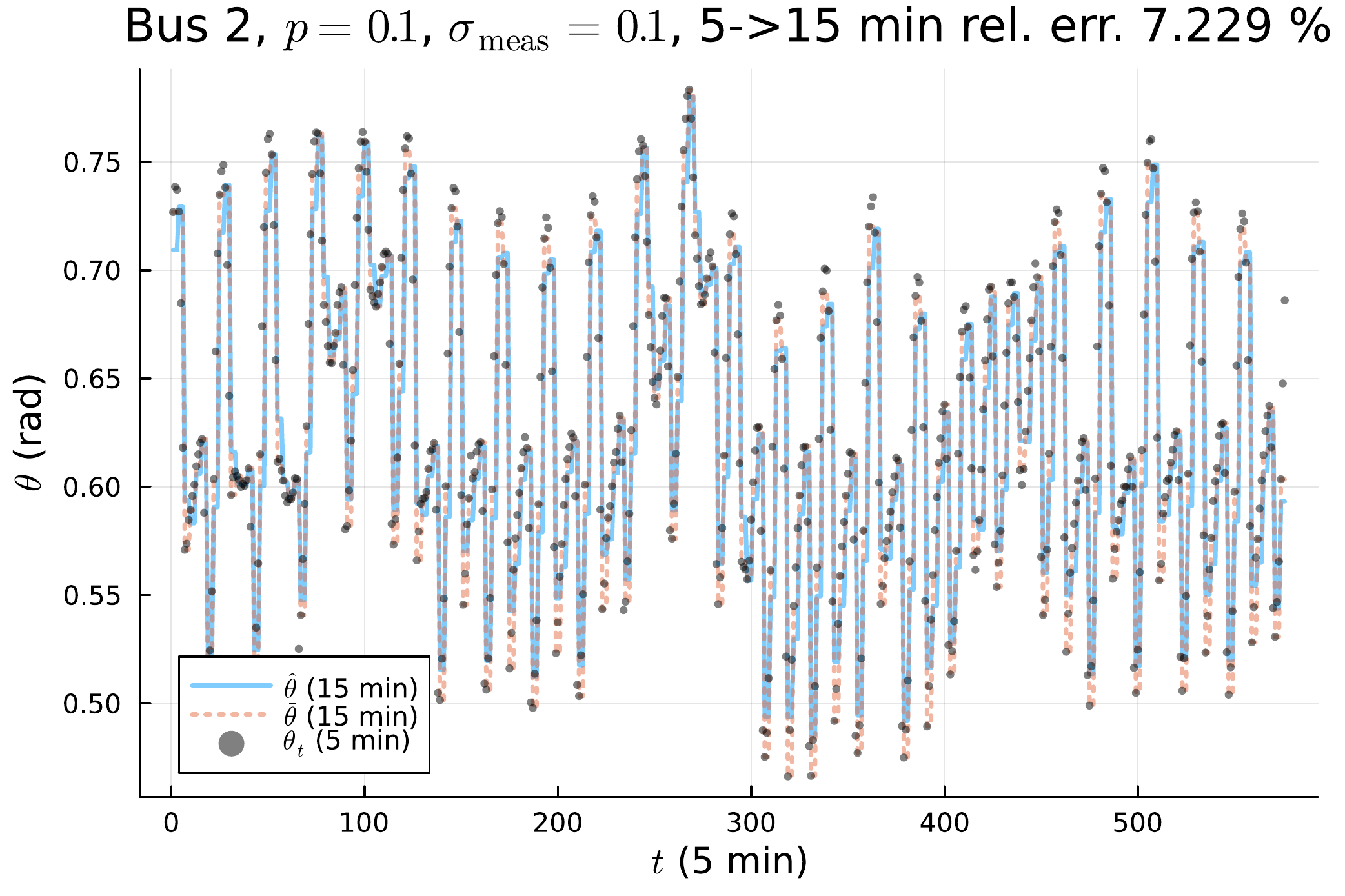}
            \includegraphics[width=0.49\linewidth,keepaspectratio]{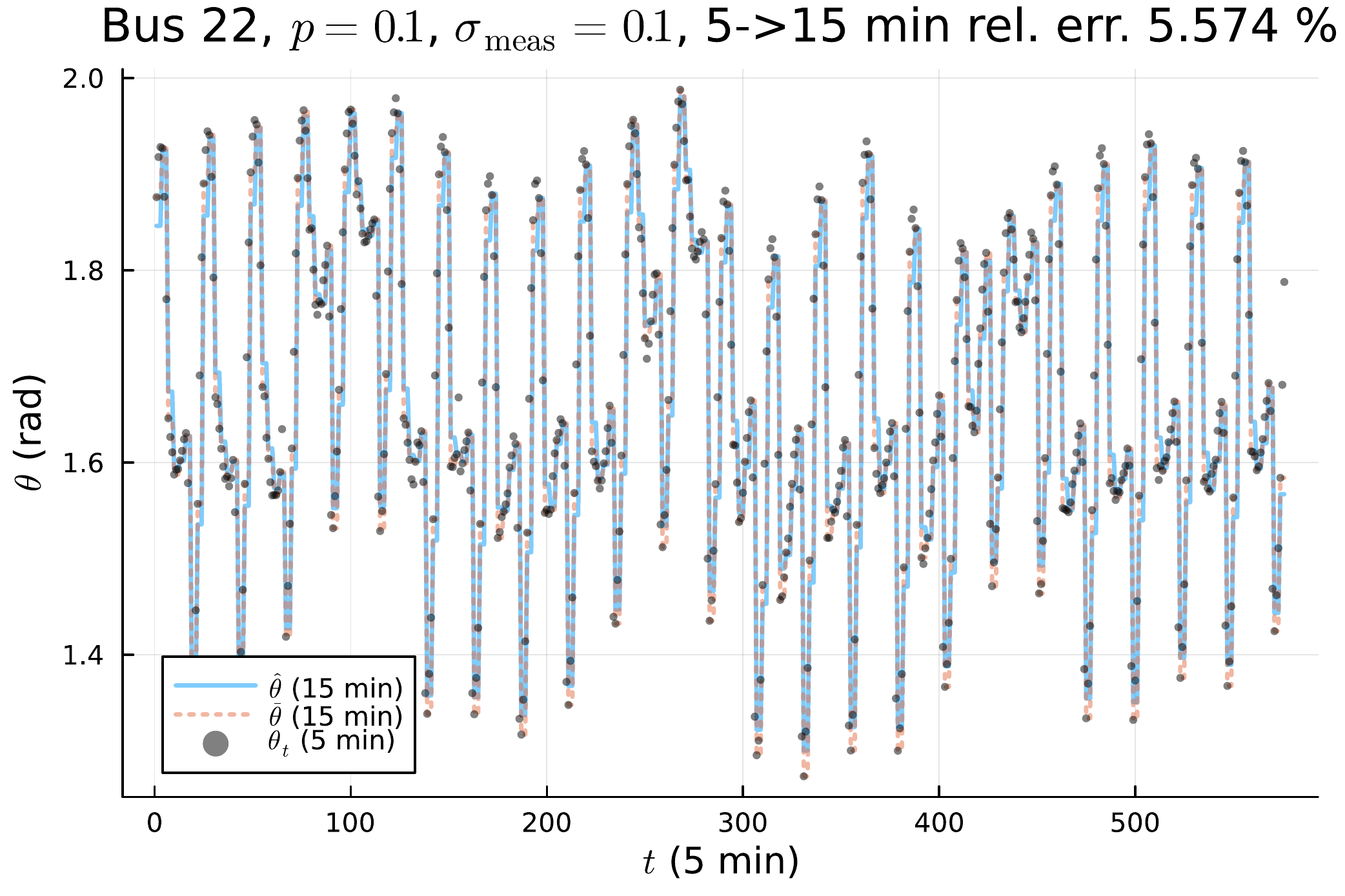}
            \caption{\textcolor{black}{Ground truth (blue) and estimated (orange dashed) voltage phase angles at 15 min. granularity, juxtaposed with ground truth 5 min. granularity voltage phase angles (black dots). Shown are PQ bus 2 ``AGRICOLA" (left) and PQ bus 22 ``BARKLA" (right) of the \texttt{RTS\_GMLC} network model.}}
            \label{fig:granularity-delayed}
        \end{figure*}
        \subsubsection{Modeling asynchronous measurements and non-ideal sampling rates}
        \label{sec:impact-time-synch}
        In practice, measurements of the form \eqref{eq:ami_signal_definition} regularly have delays and may be asynchronous. A simple model of this phenomenon can be obtained via a Bernoulli random vector $\vd \in \{0,1\}^n$, where the entries are i.i.d. as $d_i \sim \textsf{Bern}(p) $ for nodes $i=1,\dots,n$. In this model, the measurement data stream of node $i$ is delayed  with probability $p$, and is not delayed with probability $1-p$.

        Results of a survey of 31 distribution utilities in 2021 found that $>$60\% of respondents have access to AMI at sampling rates at least as frequent as 15 minutes \cite[Table 2]{peppanen_distribution_2021}, while a smaller percentage had 5 minute sampling rates. We model this nonideality by subsampling 5-minute ground truth measurements to 15-minute granularity.

        \subsubsection{Simulations}
        \label{sec:nonideal-meas-simulations}
        In Fig. \ref{fig:granularity-delayed}, the impacts of time delays, data rates, and measurement error are depicted. The ground-truth 5-minute time-series of active and reactive power load measurements are those provided by the \texttt{RTS\_GMLC} network model \cite{barrows_ieee_2019}. These ground truth measurements were used to sequentially solve the AC power flow equations, and obtain the voltage magnitudes and phase angles at 5-min intervals over the course of a year.  We then recover the voltage phase angles by applying the program \eqref{eq:nr-phret-program-with-symmetry-constraints} sequentially using measurements of the form \eqref{eq:ami_signal_definition}. The measurements were corrupted in the following ways:
        \begin{enumerate}
            \item The 5-minute ground truth $\vv_t,\vp_t,\vq_t$ measurements are subsampled to 15 min. granularity measurements.
            \item Node $i$ is delayed by 15 minutes relative to the true time-series with probability $p=0.1$, modeled by the Bernoulli random variable $\vd$. If $d_i=1$, then $\texttt{circshift}(\cdot)$ is applied to each measurement stream of bus $i$.
            \item Additive white Gaussian noise is added as in Section \ref{sec:simultaned-measurements}, with $\sigma_{\sf meas} = 0.1$.
        \end{enumerate}
        The 5-min granularity ``ground-truth" phase angles at the AC power flow solutions provided by \texttt{PowerModels.jl} are shown as black dots. The title depicts the relative percentage error between the recovered phase angle time series (interpolated back to 5-min granularity) and the 5-min ground truth.

        \subsection{Baseline comparison with known topology}
        \label{sec:numerical:baseline-comparison-knowlegdge-of-topology}

        We compare the method with a standard state estimation approach where the topology is known. We achieve this comparison by using exact closed form expressions \cite[5.2.1]{molzahn_survey_2019} for the complex-power to voltage magnitude and complex power-voltage phase angle sensitivities that depend on knowledge of the topology $\mY$. These matrices $\DSDV,\DSDTH \in \C^{n \times n}$ are defined as 
        \begin{align}
            \label{eq:dsdv-def} \DSDV &\triangleq \diag(\vvbar)\left( \diag(\conj{\mY} \conj{\vvbar}) + \conj{\mY} \diag(\conj{\vvbar})\right) \diag(\vv)^{-1},\\
            \label{eq:dsdth-def} \DSDTH &\triangleq j \diag(\vvbar)\left( \diag(\conj{\mY} \conj{\vvbar}) - \conj{\mY} \diag(\conj{\vvbar})\right),
        \end{align}
        where $\conj{(\cdot)}$ denotes the complex conjugate. Using \eqref{eq:dsdv-def} and \eqref{eq:dsdth-def}, closed form expressions for the power flow Jacobian submatrices are given as
        $\DPDV= \Re{\DSDV}$, $\DQDV =\Im{\DSDV}$,  $\DPDTH=\Re{\DSDTH}$, and $\DQDTH = \Im{\DSDTH}$.

         The above matrices can then be used to build a deterministic (model-based) Jacobian $\mJ^{\sf model} \in \R^{2n \times 2n}$ to directly solve a \textcolor{black}{standard} state estimation problem of the form
       \begin{equation}
           \minimize_{\Delta \vtheta} \quad \norm{\mJ^{\sf model} [\Delta \vtheta^T, \Delta \vv^T]^T - \Delta \vg^{\sf obs} }_2^2,
       \end{equation}
       where $\DPDTH,\DQDTH$ are known and are not decision variables, and $\Delta \vv \triangleq \Delta \vv^{\sf obs}$. \textcolor{black}{This stands in contrast with the program \eqref{eq:nr-phret-program-with-symmetry-constraints}, where $\DPDTH,\DQDTH$ are \emph{unknown}} and are included as decision variables.

        In Fig. \ref{fig:known-topology-comparison}, we benchmark the performance of the phase retrieval program \eqref{eq:nr-phret-program-with-symmetry-constraints} that does not require a model against this standard state estimation approach with a known model. The quantities $\sigma_{\sf meas}$ and $\sigma_{\sf jac}$ refer to the standard deviation of the additive Gaussian noise applied to $\Delta \vg^{\sf obs}$ and $\DPDV^{\sf obs},\DQDV^{\sf obs}$, respectively, as denoted in \eqref{eq:noisy-mismatches} and \eqref{eq:simulated-jacobian-estimation-error}, respectively. Note that in Fig. \ref{fig:known-topology-comparison}, we are applying noise to \emph{both} the measurements and $\DPDV,\DQDV$.

        \begin{figure}[tb]
            \centering
            \includegraphics[width=0.6\linewidth,keepaspectratio]{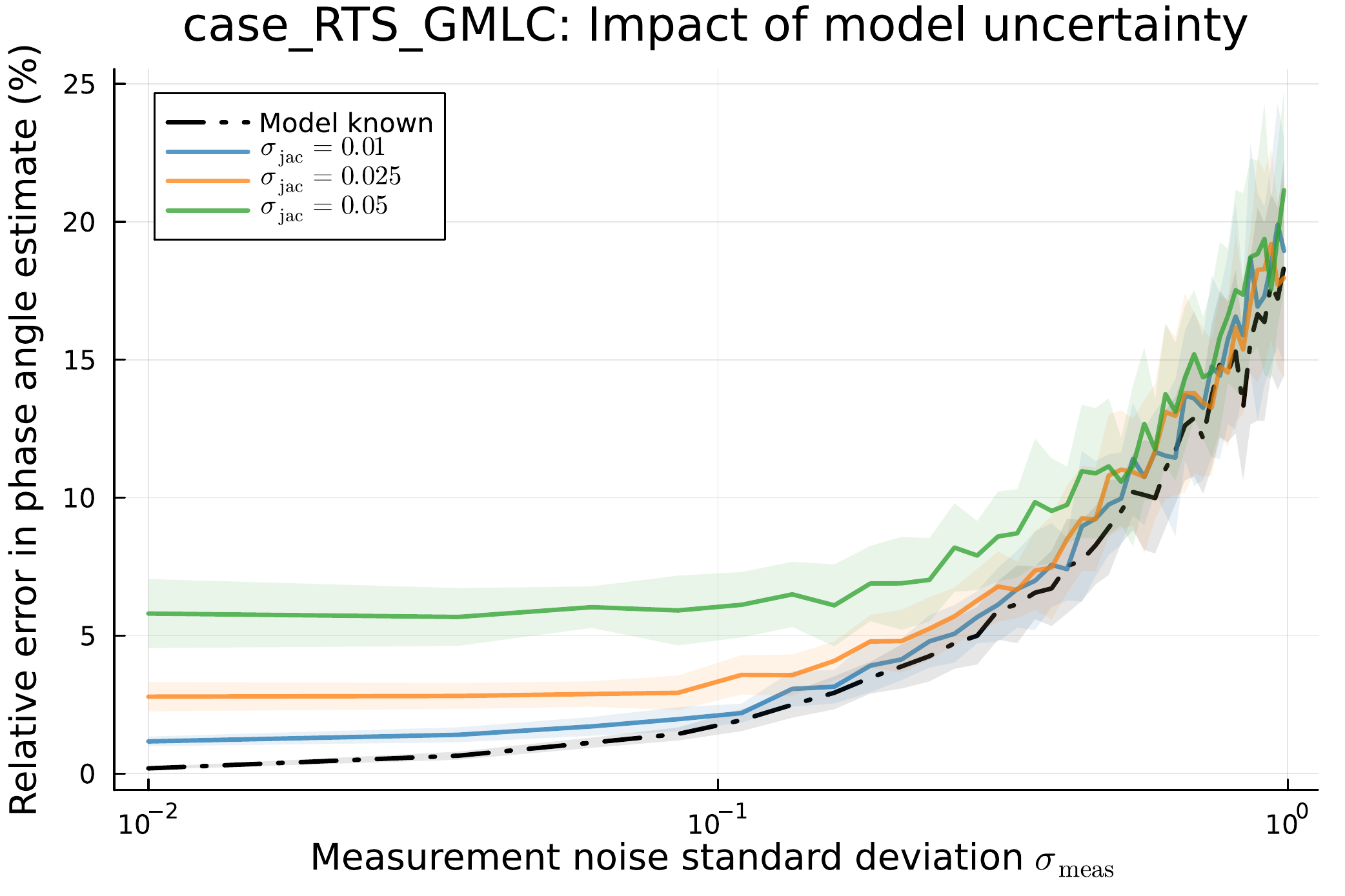}
            \caption{Impact of (\texttt{RTS\_GMLC}) model uncertainty on recovered phase angle relative error vs. measurement noise level. Shaded regions indicate $\pm$1 standard deviation of the relative errors computed over 20 bootstraps.}
            \label{fig:known-topology-comparison}
        \end{figure}

    \section{Discussion}
    \label{sec:discussion}
        The proposed extension of the phase retrieval framework to electric power systems has several potential consequences, key limitations, and directions for future work. Below, we discuss these items in Section \ref{sec:discussion-commentary}, Section \ref{sec:discussion-limitations}, and Section \ref{sec:future-work}, respectively.

        \subsection{Commentary on results}
        \label{sec:discussion-commentary}
  
        The proposed methods are a useful development from both a practical and theoretical perspective. These methods have important applications for numerous data-driven modeling scenarios in electric power systems where we only have access to measurements of the form \eqref{eq:ami_signal_definition}. The most obvious is in distribution systems, where it is typical to only have access to data like \eqref{eq:ami_signal_definition}. Applications in transmission systems are also appropriate, as large-scale PMU deployment remains an ongoing infrastructure upgrade challenge.

        The results in Lemma \ref{lemma:phaseless-symmetry} and its application in the program \eqref{eq:nr-phret-program-with-symmetry-constraints} are valuable results for data-driven power flow analysis. This is because they show that the state evolution matrices of the unmeasured voltage phase angle states can be recovered without knowledge of the voltage phase angles. The method can also be shown to be provably correct via the conditions based on spectral theory developed in Theorem \ref{thm:disc-delta-theta} and Theorem \ref{thm:disc}.

        \subsection{Limitations}
        \label{sec:discussion-limitations}
        The largest limitation of this research is that it primarily assumes a balanced power system that can be represented as a single-phase equivalent system.\footnote{Appendix \ref{apdx:unbalanced-extension} develops a preliminary extension to multiphase unbalanced networks.} An additional limitation is the fact that the phase retrieval program \eqref{eq:nr-phret-program-with-symmetry-constraints} relies on the structure of the power flow Jacobian, which is a time-varying matrix. While the matrix can be recursively estimated, e.g., as in \cite{xu_data-driven_2020,nowak-recursive-sensitivities-2020}, or via Kalman filtering; this is a key computational limitation.  A potential solution to this problem is to reformulate it in terms of the bus admittance matrix. Furthermore, the algorithm is only suitable for the PQ buses of the system or a system that only has PQ buses.

        \subsection{Future work}
        \label{sec:future-work}
        There are several key directions for future work to improve the practical application of the proposed methods. 

        \subsubsection{Multiphase extensions}
        One of the most clear research needs is generalizing the proposed algorithms to multiphase unbalanced networks. This can facilitate practical implementation beyond single-phase equivalent models in distribution systems and in other unbalanced systems. We provide preliminary results that indicate that the proposed algorithms can be generalized to multiphase unbalanced networks in Appendix \ref{apdx:unbalanced-extension}.

        \subsubsection{Additional phase retrieval formulations}
        Furthermore, recent advances in the phase retrieval literature can be synthesized with the problem setting described in this paper. For example, \cite{goldstein_phasemax_2018-1} and \cite{bahmani_phase_2017} develop convex formulations for the phase retrieval problem. These formulations may provide numerical and theoretical advantages in comparison to the approach provided by PhaseCut \cite{waldspurger_phase_2015}, which our work is largely based on. However, \cite{goldstein_phasemax_2018-1,bahmani_phase_2017} requires an appropriate guess for the phase vector, and the conditions for the effectiveness of the algorithms are based on the quality and construction of this initialization vector, which may be difficult to obtain in this application setting. An important direction for future work is to determine how other aspects of the phase retrieval framework \cite{jaganathan_phase_2016,DONG-JONATHAN-PHRET-REVIEW} can be applied to electric power systems and related to the AC power flow equations.

        \subsubsection{Partial/estimated topology knowledge}
        The limitation imposed by the time-varying nature of the power flow Jacobian could potentially be resolved by developing an extension of this work based on the nodal admittance matrix. This would require estimation of the network topology and parameters, e.g., as in \cite{yuan_inverse_2022,zhang_phaseless_topology_line_param,claeys_unbalanced_phaseless_line_param}. The nodal admittance matrix is, in principle, not time-varying. This could possibly allow measurements of the form \eqref{eq:ami_signal_definition} to be recovered without updating the power flow Jacobian.

        \subsubsection{Uncertain reactive power measurements}
        \label{sec:reactive-power-measurements}
       
        If reactive power measurements $\vq$ are unavailable, remotely readable reactive power control parameters\textemdash which are mandated to be available in the IEEE 1547-2018 standard \cite{noauthor_ieee_2018}\textemdash can be used to infer reactive power measurements when only the $\vv,\vp$ variables of \eqref{eq:ami_signal_definition} are available. Alternatively, distribution operators can use historical power factors and other heuristics, which are widely used in practice \cite{peppanen_distribution_2021}, to infer the reactive power injections. There is an opportunity for future work to develop methods to exploit knowledge of known/estimated reactive power control parameters to obtain this knowledge.

    \section{Conclusion}
    \label{sec:conclusion}
        This work proposed theory and algorithms to recover voltage phasors, current phasors, and power-voltage phase angle sensitivity matrices in electric power systems where the network model is unknown. The method requires only the voltage magnitudes, active power injections, and reactive power injections. The methods are directly inspired by the classical phase retrieval framework from the signal processing literature.

        To achieve these outcomes, this research developed a novel representation of the structure of the power flow Jacobian matrix (Lemma \ref{lemma:phaseless-symmetry}) that depends on neither the topology of the network nor the voltage phase angles. The proposed representation  of the structure of the power flow Jacobian was then used in Section \ref{sec:phasecut-nrpf} to provide a novel approach to recovering the voltage phase angles and a basis for an approximation of the power flow equations as a function of the phase angles. The method relies neither on partial observability of the phase angles nor on topological information. The proposed representation of the power flow Jacobian was also used to construct sufficient conditions (Theorem \ref{thm:disc-delta-theta}, Theorem \ref{thm:disc}) that certify when the method will work well. Consistent with the method itself, the conditions only require the measurements in \eqref{eq:ami_signal_definition}.

        Foreseeably, the proposed theory and algorithms could have significant practical engineering implications. The results of this research can enable engineers to ``fill in the information gap" in electric power networks that lack deployment of PMU measurements, as in \eqref{eq:pmu_signal_def}. It could also potentially save significant cost, by allowing fewer PMU measurements to be placed. As PMU deployment continues, these results can assist network control paradigms by retrieving comparable information to that provided by PMUs in areas that lack access to them.

    \section*{Acknowledgement}
     Samuel Talkington sincerely thanks Daniel Turizo for his feedback on the manuscript and valuable discussions.

\bibliographystyle{IEEEtran}
\bibliography{references.bib,bibs/phase_retrieval.bib,extras.bib,bibs/sensitivity.bib}

\newpage
\appendix
\section{Proof of Theorem 1}
\label{apdx:thm1-proof}
\begin{proof}
For any matrix $\mA \in \C^{n \times n}$, by the Gershgorin Circle Theorem the eigenvalues of $\mA$ are guaranteed to lie in the union of the $i=1,\dots,n$ Gershgorin discs $\setG_i(\mA)$ of the matrix, i.e.,
\begin{equation}
\label{eq:apdx:thm1:gershgorin-thm}
    \lambda_i(\mA) \in \bigcup_{i=1}^n \setG_i(\mA), \quad  i=1,\dots,n,
\end{equation}
where
\begin{equation}
    \setG_i(\mA) \triangleq \{w \in \C : | w - A_{ii}| \leq \sum_{k: k\neq i} |A_{ik}|\} \subseteq \C.
\end{equation}
It suffices to show that $\DPDTH$ is invertible to show that \eqref{eq:delta-theta-recovery-expression} holds. By \eqref{eq:apdx:thm1:gershgorin-thm}, the eigenvalues of $\DPDTH$ satisfy %the Gershgorin Circle  Theorem, we have that for all $i \in \setB$, the eigenvalues of $\DPDTH$ satisfy
\begin{equation}
     \lambda_i\left(\DPDTH\right) \in \bigcup_{i \in \setB} \left\{w\in \C: \left|w - \frac{\partial p_i}{\partial \theta_i} \right| \leq \sum_{k \in \setB \setminus \{i\}} \left|\frac{\partial p_k}{\partial \theta_i}\right| \right\} \quad \forall i \in \setB.
\end{equation}
Therefore, if
\begin{equation}
\label{eq:thm1-step1}
    \forall i \in \setB, \quad \left|\frac{\partial p_i}{\partial \theta_i} \right| > \sum_{k \in \setB \setminus \{i\}} \left| \frac{\partial p_k}{\partial \theta_i}\right|, 
\end{equation}
then 0 fails to be an eigenvalue of $\DPDTH$, in which case, $\DPDTH$ is invertible and \eqref{eq:delta-theta-recovery-expression} holds. 

By Lemma \ref{lemma:phaseless-symmetry}, the symmetries in the entries of the power flow Jacobian \eqref{eq:thm1_expressions} hold, and we know that $\frac{\partial p_i}{\partial \theta_i} = v_i\frac{\partial q_i}{\partial v_i} - 2q_i$ for all $i \in \setB$, and $\frac{\partial p_i}{\partial \theta_k} = v_k \frac{\partial q_i}{\partial v_k}$ for all $k \in \setB\setminus\{i\}$. 

Applying the power flow Jacobian structure \eqref{eq:thm1_expressions} to \eqref{eq:thm1-step1}, we obtain 
    \begin{equation}
    \label{eq:thm1-step2}
        \forall i \in \setB \quad |v_i| \sum_{k \in \setB \setminus \{i\}} \left| \frac{\partial q_k}{\partial v_i}\right| < \left|v_i\frac{\partial q_i}{\partial v_i} -2q_i \right|.
    \end{equation}
Moreover, by applying the triangle inequality to \eqref{eq:thm1-step2}, we have
\begin{equation}
\label{eq:thm1-penultimate-row-offdiag}
    \forall i \in \setB \quad |v_i| \sum_{k \in \setB \setminus \{i\}} \left| \frac{\partial q_k}{\partial v_i}\right|  < \left|v_i\frac{\partial q_i}{\partial v_i} \right| + 2 \left|q_i \right|.
\end{equation}
By construction $v_i>0$, so $|v_i| = v_i$. Rearranging \eqref{eq:thm1-penultimate-row-offdiag}, if 
\begin{equation}
\label{eq:thm1-apdx-row-condition}
    \forall i \in \setB \quad |q_i| > \frac{1}{2} v_i \left( \sum_{k \in \setB \setminus \{i\}}
    \left| \frac{\partial q_k}{\partial v_i}\right| - \left| \frac{\partial q_i}{\partial v_i}\right|
    \right),
\end{equation}
then 0 fails to be an eigenvalue of $\DPDTH$. This means that $\DPDTH$ is invertible, and therefore, it is guaranteed that we can uniquely recover the phase angle deviations $\Delta \vtheta$ as \eqref{eq:delta-theta-recovery-expression}.

This completes the proof for the row off-diagonal condition in \eqref{eq:thm1_row_offdiag}. The column off-diagonal condition \eqref{eq:thm1_column_offdiag} can be shown through the same application of the power flow Jacobian symmetry and the triangle inequality, and rearranging appropriately. We have that
\begin{subequations}
\label{eq:apdx:thm1-column-condition-inequalities}
\begin{align}
 \forall i \in \setB \quad  \left| \frac{\partial p_i}{\partial \theta_i}\right| &= \left| v_i \frac{\partial q_i}{\partial v_i}  + 2 q_i\right|\\
 &< v_i \left| \frac{\partial q_i}{\partial v_i} \right| + 2 |q_i|\\
 &> \sum_{k \in \setB \setminus \{i\}}\left| v_k \frac{\partial q_i}{\partial v_k} \right| =  \sum_{k \in \setB \setminus \{i\}} v_k\left|  \frac{\partial q_i}{\partial v_k} \right|.
\end{align}
\end{subequations}
Rewriting \eqref{eq:apdx:thm1-column-condition-inequalities}, we obtain the desired condition that if 
\begin{equation}
\label{eq:thm1-apdx-column-condition}
      \forall i \in \setB \quad |q_i| > \frac{1}{2}\left(\sum_{k \in \setB \setminus \{i\}} v_k \left| \frac{\partial q_i}{\partial v_k}\right| - v_i \left| \frac{\partial q_i}{\partial v_i} \right| \right),
\end{equation}
then 0 fails to be an eigenvalue of $\DPDTH$, and we can uniquely recover the phase angle perturbations $\Delta \vtheta$ as \eqref{eq:delta-theta-recovery-expression}. 

Note that if either the row off-diagonal bound \eqref{eq:thm1-apdx-row-condition} or the column off-diagonal bound \eqref{eq:thm1-apdx-column-condition} holds for all $i \in \setB$, then 0 is guaranteed to not be an eigenvalue of $\DPDTH$, which means that $\DPDTH$ is guaranteed to be invertible. This means that the recovery of $\Delta \vtheta$ via \eqref{eq:delta-theta-recovery-expression} is guaranteed to exist and be unique.

Identical logic can be applied to yield the bounds for the active power injection \eqref{eq:thm1:p-conditions}. By Lemma \ref{lemma:phaseless-symmetry} and the triangle inequality, if for all $i \in \setB$
\begin{subequations}
    \begin{align}
        \left|\frac{\partial q_i}{\partial \theta_i} \right| &= \left| -v_i \frac{\partial p_i}{\partial v_i} + 2 p_i\right|,\\
        &\leq v_i \left| \frac{\partial p_i}{\partial v_i}  \right| + 2 |p_i|,\\
        &> \sum_{k \in \setB \setminus\{i\}} \left| \frac{\partial q_k}{\partial \theta_i}\right|,\\
        &= |v_i| \sum_{k \in \setB \setminus\{i\}} \left| \frac{\partial p_k}{\partial v_i}\right| = v_i \sum_{k \in \setB \setminus\{i\}} \left| \frac{\partial p_k}{\partial \theta_i}\right|,
    \end{align}
\end{subequations}
or the corresponding column off-diagonal condition holds, then 0 is not an eigenvalue of $\DQDTH$, and there exists unique phase angle perturbations such that \eqref{eq:delta-theta-reactive-recovery-expression} holds. 
\end{proof}

\section{Proof of Theorem 2}
\label{apdx:thm2-proof}
% For any matrix $\mA \in \C^{n \times n}$, the eigenvalues lie in the union of the Gershgorin discs of the matrix, $\lambda_i(\mA) \in \bigcup_{i=1}^n \setG_i$, where $\setG_i = \{w \in \C : | w - A_ii| \leq \sum_{k: k\neq i} |A_{ik}|\}$. Consider the power flow Jacobian $\mJ \in \R^{2n \times 2n}$ and suppose that the phase angle blocks $\DPDTH,\DQDTH \in \R^{n \times n}$ and the phase angle perturbations $\Delta \vtheta \in \R^n$ are unknown. Recall that if all principal submatrices of a block matrix are non-singular, then said matrix is non-singular. Therefore, to establish the invertibility of $\mJ$ and the existence of a unique $\Delta \vtheta$ it sufficies to show that $\DPDTH$ and $\DQDTH$ are invertible to complete the proof. By the Gershgorin circle theorem we have that
% \begin{subequations}
% \begin{align}
%     \lambda_i\left(\DPDTH\right) &\in \bigcup_{i=1}^n \left\{w\in \C: \left|w - \frac{\partial p_i}{\partial \theta_i} \right| \leq \sum_{k:k\neq i } \left|\frac{\partial p_k}{\partial \theta_i}\right| \right\},\\
%     \lambda_i\left(\DQDV \right) &\in \bigcup_{i=1}^n \left\{ w \in \C: \left| w- \frac{\partial q_i}{\partial v_i} \right| \leq \sum_{k:k\neq i} \left| \frac{\partial q_k}{\partial v_i}\right|\right\}.
% \end{align}
% \end{subequations}
% Recall from the symmetrical structure of the power flow Jacobian derived in Section \ref{sec:phaseless_symmetry} that $\frac{\partial p_i}{\partial \theta_k} = -v_k$

\begin{table}[b]
    \centering
    \begin{tabular}{|c|c|}
    \hline
         Case & Assumption \ref{assum:full-rank-diagonal-blocks} holds?  \\
         \hline
         \texttt{14} & \textcolor{olive}{Yes}\\ 
         \texttt{24\_ieee\_rts} &  \textcolor{olive}{Yes}\\
         \texttt{ieee30} &  \textcolor{olive}{Yes}\\
         \texttt{RTS\_GMLC} & \textcolor{olive}{Yes}\\
         \texttt{118} & \textcolor{olive}{Yes}\\
        \texttt{89pegase} & \textcolor{olive}{Yes}\\
         \texttt{ACTIVSg200} & \textcolor{olive}{Yes}\\
         \texttt{ACTIVSg500} & \textcolor{olive}{Yes}\\
         \texttt{ACTIVSg2000} & \textcolor{olive}{Yes} \\
         \hline
    \end{tabular}
    \caption{All test networks satisfy Assumption \ref{assum:full-rank-diagonal-blocks}.}
    \label{tab:assum}
\end{table}

\begin{proof}
The Gershgorin Circle Theorem can be generalized to block matrices, as shown in \cite[Corollary 3.2]{echeverria_block_2018} and \cite[Definition 3]{feingold_block_1962}. Consider a block matrix $\mM$ of the form
\begin{equation}
    \mM = [\mA_{ij}] \quad \textsf{with blocks} \quad \mA_{ij} \in \C^{m \times m}, \quad i,j=1,\dots,n,
\end{equation}
where $\rank(\mA_{ii}) = m$ for all $i=1,\dots,n$. According to \cite[Corollary 3.2]{echeverria_block_2018}, if $\lambda \in \C$ is an eigenvalue of $\mM$, then there exists  at least one  $i \in \{1,\dots,n\}$ such that
\begin{equation}
\label{eq:echeverria-block-gershgorin}
    \sum_{j \neq i}^n \norm{(\mA_{ii} - \lambda \mId)^{-1} \mA_{ij}}_2 \geq 1,
\end{equation}
where $\norm{\mA}_2$ is the matrix 2-norm, or spectral norm, which is defined as $\norm{\mA}_2 \triangleq \max_{\vx \neq \boldsymbol{0}} \frac{\norm{\mA \vx}}{\norm{\vx}_2} = \sigma_{\sf max}(\mA)$, where $\sigma_{\sf max}(\mA)$ is the largest singular value of $\mA$.

Furthermore, every eigenvalue $\lambda$ of $\mM$ lies in the unions of these discs.
\begin{equation}
    \lambda(\mM) \in \bigcup_{i =1}^n \left\{ w \in \C : \sum_{j\neq i}^n \norm{(\mA_{ii} - w \mId)^{-1} \mA_{ij}}_2 \geq 1\right\}.
\end{equation}
If all the blocks $\mA_{ij}$ have dimension $1 \times 1$, then this statement reduces to the classical Gershgorin Circle Theorem. Suppose that $\mM$ takes the form of a $2n \times 2n$ block matrix that satisfies Assumption \ref{assum:full-rank-diagonal-blocks}, which we write as 
\begin{equation}
\renewcommand*{\arraystretch}{1.5}
    \mM \triangleq \begin{bmatrix}
        \mA_{11} & \mA_{12}\\
        \mA_{21} & \mA_{22}
    \end{bmatrix},
\end{equation}
where $\rank(\mA_{11}) = \rank(\mA_{22}) = n$.
Then, applying \eqref{eq:echeverria-block-gershgorin}, if 0 is an eigenvalue of $\mM$, then
\begin{equation}
    \norm{\mA_{11}^{-1} \mA_{12}}_2 \geq 1  \quad \textsf{or}  \quad \norm{\mA_{22}^{-1} \mA_{21}}_2 \geq 1.
\end{equation}
The contraposition is that if
\begin{equation}
\label{eq:apdx:thm2:blockgershgorinsufficient}
    \norm{\mA_{11}^{-1} \mA_{12}}_2 < 1  \quad \textsf{and}  \quad \norm{\mA_{22}^{-1} \mA_{21}}_2 <1,
\end{equation}
then 0 is guaranteed to \emph{not} be an eigenvalue of $\mM$. That is, \eqref{eq:apdx:thm2:blockgershgorinsufficient} is sufficient to certify the invertibility of the block matrix $\mM$. 

We can complete the proof by applying the above logic to the power flow Jacobian. Let $\mJ \in \R^{2|\setB|\times 2|\setB|}$ be the power flow Jacobian for PQ buses $\setB$ under study. Recall that we defined $\vx = [\vtheta^T, \vv^T]^T \in \R^{2|\setB|}$ as a guess for the network state. We want to establish when the entire block matrix
\begin{equation}%\arraycolsep=1.4pt\def\arraystretch{1.5}
\renewcommand*{\arraystretch}{1.5}
       \mJ(\vx) \triangleq \begin{bmatrix}
       \DPDTH(\vx) & \DPDV(\vx)\\
            \DQDTH(\vx) & \DQDV(\vx)
       \end{bmatrix} 
    \end{equation}
    is invertible\textemdash with $\vtheta$ unknown. This is equivalent to showing that $0$ is not an eigenvalue of $\mJ$. By Assumption \ref{assum:full-rank-diagonal-blocks}, $\DPDTH$ and $\DQDV$ are invertible\footnote{Assumption \ref{assum:full-rank-diagonal-blocks} empirically holds for all test cases studied, including large synthetic real-world networks, as shown in Table \ref{tab:assum}.}. Then by \cite[Corollary 3.2]{echeverria_block_2018}, if $\lambda \in \C$ is an eigenvalue of $\mJ$, then either
\begin{equation}
\norm{\left(\DPDTH - \lambda \mId\right)^{-1} \DPDV}_2 \geq 1
\end{equation}
or
\begin{equation}
    \norm{\left(\DQDV - \lambda \mId\right)^{-1} \DQDTH}_2 \geq 1.
\end{equation}
Let $\mJ_{\#}: \R^{|\setB|} \times \R^{|\setB|} \times \R^{|\setB|} \mapsto \R^{2 |\setB| \times 2 |\setB|}$ be a matrix-valued function of the measurements in \eqref{eq:ami_signal_definition}. Consider the power-voltage phase angle sensitivity matrix functions $\DPDTH(\cdot,\cdot),\DQDTH(\cdot,\cdot): \R^{|\setB|} \times \R^{|\setB|} \mapsto \R^{|\setB| \times |\setB|}$ as defined in \eqref{eq:augmented-phase-sensitivities}. By parameterizing $\mJ_{\#}$ by $\vtheta$, we can write
    \begin{equation}
    \renewcommand*{\arraystretch}{1.5}
        \mJ_{\#}(\vv,\vp,\vq | \vtheta)  \triangleq \begin{bmatrix}
       \DPDTH(\vv,\vq) & \DPDV(\vv|\vtheta)\\
            \DQDTH (\vv,\vp) & \DQDV(\vv|\vtheta)
       \end{bmatrix},
    \end{equation}
    which is a function of \eqref{eq:ami_signal_definition}. Then, by Lemma \ref{lemma:phaseless-symmetry}, we have $\lambda \in \lambda(\mJ) \iff \lambda \in \lambda( \mJ_{\#})$. 
    
    Therefore, it suffices to show that 0 is not an eigenvalue of $ \mJ_{\#}$ to complete the proof.  Applying \cite[Corollary 3.2]{echeverria_block_2018} again, if $\lambda$ is an eigenvalue of $\mJ_{\#}$, then either
\begin{equation}
\norm{\left(\DPDTH(\vv,\vq) - \lambda \mId\right)^{-1} \DPDV}_2 \geq 1
\end{equation}
or
\begin{equation}
    \norm{\left(\DQDV - \lambda \mId\right)^{-1} \DQDTH(\vv,\vp)}_2 \geq 1.
\end{equation}
     Finally, applying Lemma \ref{lemma:phaseless-symmetry} and forming the contrapositive, if
    \begin{equation}
    \label{eq:apdx2:proofthm2:final}
        \norm{\left(\DPDTH(\vv,\vq)\right)^{-1} \DPDV}_2 < 1, \quad \textsf{and} \quad  \norm{\left(\DQDV\right)^{-1} \DQDTH(\vv,\vq)}_2 <1,
    \end{equation}
    then 0 fails to be an eigenvalue of $\mJ$. Note that the inequalities \eqref{eq:apdx2:proofthm2:final} depend solely on \eqref{eq:ami_signal_definition}. This completes the proof for \eqref{eq:thm2_row_offdiag}. Analogous arguments can be applied to $\mJ_{\#}^T$ to yield the second set of inequalities \eqref{eq:thm2_column_offdiag}, because $\sigma(\mJ) = \sigma(\mJ^T)$ and $\lambda(\mJ) = \lambda(\mJ^T)$.

%---------------------------------------------------------------

    As in Theorem \ref{thm:disc-delta-theta}, conditions \eqref{eq:thm2_row_offdiag} and \eqref{eq:thm2_column_offdiag} allow us to certify Theorem \ref{thm:disc} for a subset of PQ buses $\setB$ under study. This can be achieved by exploiting the properties of singular values of principle submatrices. Let $\mS_{\bullet} \in \R^{(|\setB| -b) \times (|\setB| -b)}$ be a principal submatrix of one of the Jacobian submatrix products, 
    \begin{equation}
    \label{eq:jac-submatrix-products}
        \mS \in \left\{ \DPDTH(\vv,\vq)^{-1} \DPDV, \ \ \DQDV^{-1} \DQDTH(\vv,\vp), \ \ \DPDTH(\vv,\vq)^{-1}\DQDTH(\vv,\vp), \ \ \DQDV^{-1} \DPDV \right\},
    \end{equation}
    with the corresponding rows and columns of $b < |\setB|$ buses removed. Note that $\mS$ is one of the arguments of $\norm{\cdot}_2$ for the spectral norm  conditions in \eqref{eq:thm2_row_offdiag} or \eqref{eq:thm2_column_offdiag}. Then, by the submatrix property of singular values \cite[\S 1.8]{strang_linear_2019}, $\forall i=1,\dots,|\setB| -b,$ the singular values of $\mS_{\bullet} $ satisfy $\sigma_{i + 2b}(\mS) \leq \sigma_i(\mS_{\bullet}) \leq \sigma_i(\mS)$ for all Jacobian submatrix products $\mS$ as defined in \eqref{eq:jac-submatrix-products}. Thus, the conditions \eqref{eq:thm2_row_offdiag} and \eqref{eq:thm2_column_offdiag} remain sufficient for the $|\setB|-b$ buses.

\end{proof}

\section{Additional numerical results}
\label{apdx:additional-numres}

\subsection{Extension to three-phase unbalanced networks}
        \label{apdx:unbalanced-extension}
        Consider a three-phase unbalanced distribution network with $n$ buses. The \emph{compound admittance matrix} of the network, $\mY \in \C^{3n \times 3n}$, is a block matrix $\mY= \left[\mM_{i,j}\right]_{i,j=1}^n$. Block $\mM_{i,j} \in \C^{3 \times 3}$ describes the mutual admittance between the three phases of branch $i \to j$ with entries $\mM_{i,j} = -\big[y^{\phi}_{i,j}\big]_{\phi \in \setF_{ij}}$. Set $\setF_{ij}$ denotes the phases of the branch $i \to j$.  The diagonal blocks $\mM_{i,i}$ are computed as $\mM_{i,i} = \mY_i -\sum_{j:j\neq i} \mM_{i,j}$, where $\mY_i \in \C^{3 \times 3}$ is the equivalent admittance of the loads connected at bus $i$, and the admittance-to-ground. We refer the reader to \cite{bazrafshan_comprehensive_2018} for more comprehensive information on modeling three-phase networks with the bus admittance matrix.

        As a proof-of-concept to verify that the proposed symmetry in Lemma \ref{lemma:phaseless-symmetry} and the phase retrieval method \eqref{eq:nr-phret-program-with-symmetry-constraints} is applicable in the unbalanced case, we compute the power-voltage phase angle sensitivities using the compound admittance matrix $\mY$. According to \cite[5.2.1]{molzahn_survey_2019}, we can express the complex-power to voltage magnitude and complex power-voltage phase angle sensitivities as complex matrices $\DSDV,\DSDTH \in \C^{n \times n}$ of the form
        \begin{align}
             \DSDV &= \diag(\vvbar)\left( \diag(\conj{\mY} \conj{\vvbar}) + \conj{\mY} \diag(\conj{\vvbar})\right) \diag(\vv)^{-1},\\
            \DSDTH &= j \diag(\vvbar)\left( \diag(\conj{\mY} \conj{\vvbar}) - \conj{\mY} \diag(\conj{\vvbar})\right),
        \end{align}
        where $\vvbar \in \C^n$ are the complex voltage phasors, $\vv = |\vvbar|$ are the voltage magnitudes, and $\conj{(\cdot)}$ denotes the complex conjugate. We then have $\DPDV,\DQDV = \Re{\DSDV},\Im{\DSDV}$, and $\DPDTH,\DQDTH=\Re{\DSDTH},\Im{\DSDTH}$. 
        
        These closed-form expressions allow us to verify that the structural symmetries derived in Lemma \ref{lemma:phaseless-symmetry} hold in the unbalanced case. In Table \ref{tab:muti-phase-works}, we show the relative error between the deterministic $\DPDTH,\DQDTH$ sensitivities and the model-free power-voltage phase angle sensitivity functions $\DPDTH(\vv,\vq),\DQDTH(\vv,\vp)$ as defined in  \eqref{eq:augmented-phase-sensitivities}. The results are shown for a three-phase unbalanced distribution test network with three buses.
        
        Data used to compute the values in Table \ref{tab:muti-phase-works} were provided by \texttt{PowerModelsDistribution.jl} \cite{powermodels_distribution}. The case used was the three-phase unbalanced \texttt{case\_3\_unbalanced.dss}. The \texttt{ACPUPowerModel} polar AC power flow model was used to obtain a solution. The voltage magnitudes, active powers, and reactive powers were then computed at the AC power flow solution, and used to generate $\DPDTH(\vv,\vq)$ and $\DQDTH(\vv,\vp)$ as defined in  \eqref{eq:augmented-phase-sensitivities}.

\begin{table}[t]
    \centering
    \begin{tabular}{|c|c|}
    \hline
        Quantity & Value\\
        \hline
        $\norm{\DPDTH -\DPDTH(\vv,\vq)}_F \Big/ \norm{\DPDTH}_F$  & $2.510 \times 10^{-8}$ \\
        $\norm{\DQDTH - \DQDTH(\vv,\vp)}_F \Big/ \norm{\DQDTH}_F$ & $1.725 \times 10^{-7}$ \\
        \hline
    \end{tabular}
    \caption{Verification that the structure expressions \eqref{eq:thm1_expressions} of Lemma \ref{lemma:phaseless-symmetry} hold for multiphase unbalanced networks.}
    \label{tab:muti-phase-works}
\end{table}

\subsection{Verifying Lemma \ref{lemma:phaseless-symmetry} in deterministic Newton-Raphson power flow}
    \label{apdx:phaseless-nrpf-iter}
               \begin{table}[t]
        \renewcommand{\arraystretch}{1.2}
            \centering
            \begin{tabular}{|c||c|c|c|c|}
            \hline
               \multirow{2}{*}{Iteration} &  \multicolumn{4}{c|}{ Relative error $\frac{\|(\cdot) - \hat{(\cdot)}\|}{\|(\cdot)\|} \times 10^{16}$  }\\ \cline{2-5} 
                & $\vtheta$  & $\vell $ & $\DPDTH$ & $\DQDTH$  \\
                \hline
                 t=1 & $4.922 $ & $1.513 $ & $2.407 $ & 5.096 \\
                 t=2 & $5.324 $ & 1.907 &  2.525 & 3.381  \\
                 t=3 & $7.444 $ & 1.979 & 3.020 & 5.118 \\
                 t=4 & $4.377 $ & 1.514  & 2.066 & 3.001 \\
                 t=5 & $4.681 $ & 1.497 & 2.904 &  6.780 \\
                 t=6 & $3.530 $ & 1.182 & 2.973 & 5.583 \\
                 \hline
            \end{tabular}
            \caption{Relative error ($\times 10^{16}$) between the phaseless and deterministic Newton-Raphson power flow model with no noise for the \texttt{RTS\_GMLC} test case \cite{barrows_ieee_2019}.}
            \label{tab:iterative_rts_gmlc}
        \end{table}

        The phaseless sensitivity functions $\DPDTH(\vv,\vq),\DQDTH(\vv,\vp)$ as defined in \eqref{eq:augmented-phase-sensitivities} are equivalent to the deterministic matrices $\DPDTH,\DQDTH$ in \eqref{eq:dpdth_jac_block}, \eqref{eq:dqdth_jac_block}. This can be shown by using $\DPDTH(\vv,\vq),\DQDTH(\vv,\vp)$ to iteratively solve the Newton-Raphson power flow model \eqref{eq:nr_pf_definition} in the noiseless case without considering $\vtheta$ as a grid state. To this end, we compare the quantities $\DPDTH,\DQDTH,$ $\vtheta,$ and $\vell$ at every Newton-Raphson iteration $t=1,\dots,$ when solving the \texttt{RTS\_GMLC} test case with both the phaseless and deterministic $\DPDTH,\DQDTH$ in Table \ref{tab:iterative_rts_gmlc}. The results  indicate that\textemdash with no observations of the phase angles\textemdash comparative results for the Newton-Raphson power flow model are achieved \emph{within machine precision} at each iteration. This demonstrates that the results in Lemma \ref{lemma:phaseless-symmetry} are \emph{physically} valid.
        
        % Consequently, we can completely remove access to the phase information in the model, and instead apply the Newton-Raphson phase retrieval model \eqref{eq:generic-nr-phase-recovery} at every iteration to continue the next step in the algorithm. To this end, we compare the inferred results, as described in Section \ref{sec:phasecut-nrpf}, for the quantities $\DPDTH,\DQDTH,$ $\vtheta,$ and $\vell$ at every iteration $t=1,\dots,$ for the \texttt{RTS\_GMLC} test case in Table \ref{tab:iterative_rts_gmlc}. 
 
        % Altogether, the results in Table \ref{tab:iterative_rts_gmlc} indicate that\textemdash with no observations of the phase angles and no noise in the magnitude measurements\textemdash comparative results for the Newton-Raphson power flow model are achieved \emph{within machine precision} at each iteration.

    \begin{figure}[t]


        \centering
        \includegraphics[width=0.49\linewidth,keepaspectratio]{figs/angle_est_case24_ieee_rts.pdf}\hfill
        \includegraphics[width=0.49\linewidth,keepaspectratio]
        {figs/angle_est_case_ieee30.pdf}
        %\hfill
        %\includegraphics[width=0.65\linewidth,keepaspectratio]{figs/angle_est_case_RTS_GMLC.pdf}
        \caption{Performance of \eqref{eq:nr-phret-program-with-symmetry-constraints} by measurement noise level. The shaded ribbon indicates a $\pm 2 e$ error interval where $e$ is the absolute bus voltage phase error.}
        \label{fig:apdx:phret_results_by_noise}
    \end{figure}
    
    \subsection{Recovery of voltage phase angles and power-voltage phase angle sensitivities}
    \label{apdx:recovery-voltage-phase-angles}
    Additional numerical simulations are shown in this appendix for the same phase retrieval technique \eqref{eq:nr-phret-program-with-symmetry-constraints} used for the simulations in Section \ref{sec:numerical-ph-angle-recovery}. In particular, we report results for the IEEE 30-bus test system and the IEEE 24-bus reliability test system (RTS).

    In Fig. \ref{fig:apdx:phret_results_by_noise}, we show the estimation of the bus voltage phase angles at the AC power flow solution using \eqref{eq:nr-phret-program-with-symmetry-constraints} for additional test cases with multiple noise levels. The experiments are the same as was done in Section  \ref{sec:numerical-ph-angle-recovery}, which yielded the prior results in Fig. \ref{fig:phret_results_by_noise} for the \texttt{RTS\_GMLC} network model.   Fig. \ref{fig:apdx:sensitivity-results} shows the power-voltage phase angle sensitivity recovery results for additional test cases, in the same manner as was done in Section  \ref{sec:sensitivity-results}, which yielded the prior results in Fig. \ref{fig:sensitivity-results} for the \texttt{RTS\_GMLC} network model.
            \begin{figure}[t]


            \centering
            \includegraphics[width=0.49\linewidth,keepaspectratio]{figs/sigma_0.5_phase_angle_blocks_case24_ieee_rts.pdf}
            \includegraphics[width=0.49\linewidth,keepaspectratio]{figs/sigma_0.5_phase_angle_blocks_case_ieee30.pdf}\hfill
            \hfill
            \caption{Recovery of the power-voltage phase angle submatrices of the AC power flow Jacobian for additional test cases via the %Newton-Raphson 
            phase retrieval program \eqref{eq:nr-phret-program-with-symmetry-constraints} with $\sigma_{\sf meas} = 0.1$. Bus indices correspond to PQ bus ordering of the basic network format of \texttt{PowerModels.jl} \cite{powermodels}. Frobenius norm relative error is displayed on the second row.}
            \label{fig:apdx:sensitivity-results}
        \end{figure}
    
    % \subsection{Approximation of AC voltage waveforms}
    % \label{apdx:waveform-results}
    %     Note that the recovery of the voltage phase angles as described in Section \ref{sec:phret-results} enables the user to form an estimate of the AC voltage sinusoids, which can be used to form a prediction of the voltage measurements in time. Figure \ref{fig:sinusoid-results} shows the 5 voltage sinusoids who phase angle shifts vary the most from the sample mean of the phase angles of the network.
    %     \begin{figure}[tb]
    %        \centering
    %         % \includegraphics[width=0.49\linewidth,keepaspectratio]{figs/sigma_0.5_sinus_est_case24_ieee_rts.pdf}
    %         % \hfill
    %         \includegraphics[width=0.75\linewidth,keepaspectratio]{figs/sigma_0.5_sinus_est_case_RTS_GMLC.pdf}
    %         %\caption{For the 24-bus IEEE RTS test case (left) and the RTS-GMLC test case (right):
    %         \caption{Reconstruction of the AC voltage waveforms for the RTS-GMLC test case using the phase retrieval program \eqref{eq:nr-phret-program-with-symmetry-constraints}, assuming a nominal grid frequency of $f=60$ Hz.}
    %         \label{fig:sinusoid-results}
    %     \end{figure}

%%%%%%%%%%%%%%%%%%%%%%%%%%%%%%%%%%%%%%%%%%%%%%%%%%%%%%%%%%%%%%%%%%%%%%%%%%%%%%%%%%%%%%%%%

\end{document}